\DeclareMathAlphabet{\mathcal}{OMS}{cmsy}{m}{n}
\DeclareMathAlphabet{\mathpzc}{OT1}{pzc}{m}{it}
\newcommand{\helper}{\mathcal{H}}
\newcommand{\verifier}{\mathcal{V}}
\newcommand{\fail}{\perp}
\newcommand{\algorithm}{\mathcal{M}}
\DeclareMathOperator{\hcost}{hcost}
\DeclareMathOperator{\vcost}{vcost}
\DeclareMathOperator{\occ}{occ}
\newcommand{\finger}{\mathfrak f}
\newcommand{\eat}[1]{}
\newcommand{\push}{\mathsf{push}}
\newcommand{\pop}{\mathsf{pop}}
\newcommand{\event}{\mathsf{event}}
\newcommand{\edge}{\mathsf{edge}}
\newcommand{\para}[1]{\medskip \noindent{\bf #1}}
\newtheorem{theorem}{\textbf{Theorem}}[section]
\newtheorem{definition}[theorem]{\textbf{Definition}}
\newtheorem{corollary}[theorem]{\textbf{Corollary}}
\newtheorem{lemma}[theorem]{\textbf{Lemma}}
\begin{document}
\title{Streaming Graph Computations with a Helpful Advisor}
\author{Graham Cormode\thanks{AT \& T Labs -- Research,
graham@research.att.com} 
 \and 
Michael Mitzenmacher\thanks{Harvard University, 
School of Engineering and Applied Sciences, michaelm@eecs.harvard.edu. This work was supported in part by NSF grants CCF-0915922 and CNS-0721491, and in part by grants from Yahoo! Research, Google, and Cisco, Inc.}
\and Justin Thaler\thanks{Harvard University, 
School of Engineering and Applied Sciences,
jthaler@seas.harvard.edu. Supported by the Department of Defense (DoD) through the National Defense Science \& Engineering Graduate Fellowship (NDSEG) Program.}}
%\and
%Harvard University, School of Engineering and Applied Sciences\\
%\email{ jthaler@fas.harvard.edu}}
%\date{}
\maketitle
%\vspace*{\absqueeze}
\begin{abstract}
Motivated by the trend to outsource work to commercial cloud computing
services, we consider a variation of the streaming paradigm where a
streaming algorithm can be assisted by a powerful helper that can
provide annotations to the data stream.  We extend previous work on
such {\em annotation models} by considering a number of graph
streaming problems.  Without annotations, streaming algorithms for 
graph problems generally require significant memory;  we show that
for many standard problems, including all graph problems that can 
be expressed with totally unimodular integer programming formulations,
only a constant number of hash values are 
needed for single-pass algorithms given linear-sized annotations.
We also obtain a protocol achieving \textit{optimal} tradeoffs between
annotation length and memory usage for matrix-vector multiplication;
this result contributes to a trend of recent research on numerical
linear algebra in streaming models.
\end{abstract}

%\vspace*{\absqueeze}
\section{Introduction} 
\label{intro}
The recent explosion in the number and scale of real-world structured
data sets including the web, social networks, and other relational
data has created a pressing need to efficiently process and analyze
massive graphs. This has sparked the study of graph algorithms that
meet the constraints of the standard streaming model: restricted
memory and the ability to make only one pass (or few passes) over
adversarially ordered data.  However, many results for graph streams
have been negative, as many foundational problems require either 
substantial working memory or a prohibitive number of passes over the data
\cite{graphmining08}.  Apparently most graph algorithms fundamentally
require flexibility in the way they query edges, and therefore the
combination of adversarial order and limited memory makes many
problems intractable.

To circumvent these negative results, variants and
relaxations of the standard graph streaming model have been proposed, 
including the Semi-Streaming 
%\cite{semistream1}
\cite{semistream2}, W-Stream \cite{wstream}, Sort-Stream
\cite{sortstream}, Random-Order \cite{graphmining08}, and Best-Order
\cite{bestorder} models. 
In Semi-Streaming, memory requirements are relaxed,
allowing space proportional to the number of vertices in
the stream but not the number of edges.  
The W-Stream model allows the
algorithm to write temporary streams to aid in computation. 
And, as their names suggest, the Sort-Stream, Random-Order, and Best-Order
models relax the assumption of adversarially ordered input.  
The Best-Order model, for example, allows the input stream to be re-ordered
arbitrarily to minimize the space required for the computation.

In this paper, our starting point is a relaxation of the
standard model, closest to that put forth by Chakrabarti {\em et al.}
\cite{icalp09}, called the \textit{annotation model}. 
Motivated by recent work on outsourcing of database
processing,
% \cite{db1,db2,db3}
as well as commercial cloud computing services such as
Amazon EC2, the annotation model allows access to a powerful advisor,
or \textit{helper} who observes the stream concurrently with
the algorithm. 
Importantly, in many of our motivating applications,
the helper is not a trusted entity: the commercial stream processing
service may have executed a buggy algorithm, experienced a hardware
fault or communication error, or may even be deliberately deceptive
\cite{bestorder,icalp09}. As a result, we require our protocols to be
\textit{sound}: our {\em verifier} must detect any lies or deviations
from the prescribed protocol with high probability.

The most general form of the annotation model allows the helper
to provide additional annotations in the data stream {\em at any point} to
assist the verifier, and one of the cost measures is the total length
of the annotation.  
In this paper, however, we focus on the case where the helper's
annotation arrives as a single message after both the helper
and verifier have seen the stream.  
The helper's message is also processed as a stream, since it may be
large; it often (but not always) includes a
re-ordering of the stream into a convenient form, as well as
additional information to guide the verifier.  
This is therefore stronger than the Best-Order model, which only
allows the input to be reordered and no more; but it is weaker than
the more general {\em online} model,  because in our model 
the annotation appears only after the input stream has finished. 

We argue that this model is of interest for several reasons. 
First, it requires minimal coordination between helper and verifier, 
since it is not necessary to ensure that annotation and stream data
are synchronized. 
Second, it captures the case when the verifier uploads data to 
the cloud as it is collected, 
and later poses questions over the data to the helper. 
Under this paradigm, the annotation must come \textit{after} the
stream is observed.
Third, we know of no non-trivial problems which separate the
general online and our ``at-the-end" versions of the model, 
and most prior results are effectively in this model. 
%desirable to minimize modeling assumptions, especially if we do not 
%give up meaningful computing power by doing so.

%
\eat{
Our algorithms therefore utilize a model that is stronger than the Best-Order model,
in that it allows for additional annotation information; however, it
is weaker than the online annotation model, in that we utilize only a
single terminal message (and always, in our work, provide a
re-ordering of the stream data).  We refer to this as the {\em weak
annotation model}.  Given that we are working with graph data,
re-ordering appears necessary to obtain useful results; however, we
leave as open questions obtaining better results by using either
weaker or stronger models.
}

Besides being practically motivated by  outsourced
computations, annotation models are closely related to Merlin-Arthur
proofs with space-bounded verifiers, and studying what can (and
cannot) be accomplished in these models is of independent
interest.

\para{Relationship to Other Work.}
Annotation models were first explicitly studied by
Chakrabarti {\em et al.} in
\cite{icalp09}, and focused primarily on protocols for 
canonical problems in numerical streams, such as Selection,
Frequency Moments, and Frequent Items. 
The authors also provided protocols for some graph problems: 
counting triangles, connectedness, and bipartite perfect matching.  
%Recent work by Cormode and Yi \cite{Yi} extends the annotation model to allow
%multiple rounds of interaction between helper and verifier. 
%They obtain exponentially better protocols in this more powerful
%model for problems such as Frequency Moments, Inner Products, and
%Index.  
The Best-Order Stream Model was put forth by Das Sarma et
al. in \cite{bestorder}. 
They present protocols requiring logarithmic
or polylogarithmic space (in bits) for several problems, including perfect
matching and connectivity.   
Historical antecedents for this work are due to Lipton \cite{Lipton-old}, who
used fingerprinting methods to verify polynomial-time
computations in logarithmic space. 
%In our terminology, this generates a polynomially large amount of
%advice in a generic way; our goal is to provider tighter bounds for
%specific problems. 
Recent work  verifies shortest-path  computations
using cryptographic primitives, using polynomial space for the
verifier~\cite{ICDE10}. 

\para{Our Contributions.}
We identify two qualitatively different approaches to 
producing protocols for problems on graphs with $n$ nodes and
$m$ edges. 
In the first, the helper directly proves matching 
upper and lower bounds on a quantity. 
%such as by taking advantage of the specific combinatorial
%structure of the problem at hand, or by exploiting the duality of
%linear programming. 
Usually, proving one of the two bounds is trivial: the helper
provides a feasible solution to the problem. 
But proving {\em optimality} of the feasible solution can be more
difficult, requiring the use of structural properties of the problem. 
In the second, we simulate the execution of a non-streaming
algorithm, using the helper to maintain the algorithm's internal data
structures to control the amount of memory used by the
verifier. 
The helper must provide the contents of the
data structures so as  to limit the amount of
annotation required.

Using the first approach (Section \ref{sec:direct}), we show that only
constant space
and annotation linear in the input size $m$ is needed to
determine whether a directed graph is a DAG and to compute the
size of a maximum matching.  We describe this as an $(m, 1)$ protocol,
where the first entry refers to the annotation size (which we also call the $\hcost$) and the second to the memory
required for the verifier (which we also call the $\vcost)$.
% We write this as an $(m, \log m)$ protocol to show the dependence of
% these two quantities on the number of edges $m$. 
Our maximum matching result significantly extends
the bipartite perfect matching protocol of \cite{icalp09}, and is
tight for dense graphs, in the sense that there is a lower
bound on the \textit{product} of $\hcost$ and $\vcost$ of $\hcost
\cdot \vcost = \Omega(n^2)$ bits for this problem.
Second, we define a streaming version of the
linear programming problem, and provide an $(m, 1)$ protocol.  
By exploiting duality, 
we hence obtain $(m,1)$ protocols for many graph problems with
% $(m, \log m)$-
%As a corollary of these results, and 
totally unimodular integer programming formulations, including
shortest $s$-$t$ path, max-flow, min-cut, and minimum-weight bipartite
perfect matching. We also show all are tight by proving lower
bounds of $\hcost \cdot \vcost = \Omega(n^2)$ bits for all four problems.
%and provide a protocol for it also requiring
%only logarithmic space and linear annotation. 
A more involved protocol
%Moreover, for ``dense'' versions of this problem we provide a novel protocol that 
obtains \textit{optimal} tradeoffs between annotation cost and working memory for 
dense LPs and matrix-vector multiplication;
this complements recent results on approximate linear algebra 
in streaming models (see e.g. \cite{linalg1,linalg2}). 
%We also extend this
%protocol to provide optimal tradeoffs for much more general classes of convex programs, including quadratic programming and a large class of second-order
%cone programs.

For the second approach (Section \ref{sec:transcript}), we make use of
the idea of ``memory checking'' due to Blum et
al. \cite{memcheck}, which allows a small-space verifier to outsource
data storage to an untrusted server. 
We present a general simulation theorem based on this checker, and obtain as
corollaries tight protocols for a variety of canonical graph problems.
In particular, we give an $(m, 1)$ protocol for verifying a minimum
spanning tree, an $(m + n \log n, 1)$ protocol for single-source
shortest paths, and an $(n^3, 1)$ protocol for all-pairs shortest
paths. 
We provide a lower bound of $\hcost \cdot \vcost = \Omega(n^2)$
bits for the latter two problems, and an identical lower bound for MST
when the edge weights can be given incrementally.  
While powerful, this technique has its limitations: there does not
seem to be any generic way to obtain the same kind of tradeoffs
observed above. 
Further, there are some instances where direct application of memory
checking does not achieve the best bounds for a problem. 
We demonstrate this by presenting an $(n^2 \log n,
1)$ protocol to find the diameter of a graph; 
this protocol leverages the ability to use randomized
methods to check computations more efficiently than via generating or
checking a deterministic witness. 
In this case, we rely on techniques to verify 
% we rely on Freivalds' technique \cite{freivalds} for 
matrix-multiplication
 in quadratic time, and show that this is tight via   
a nearly matching
lower bound for diameter of $\hcost \cdot \vcost = \Omega(n^2)$.

In contrast to problems on numerical streams, where it is often trivial
to obtain $(m, 1)$ protocols by replaying the stream in sorted order, 
it transpires that achieving linear-sized
annotations with logarithmic space
 is more challenging for many graph problems.
Simply providing the solution (e.g. a graph matching or spanning tree)
is insufficient, since we have the additional burden of demonstrating
that this solution is {\em optimal}. 
A consequence is that we are able to provide solutions to several
problems for which no solution is known in the best-order model
(even though one can reorder the stream in the best-order model 
so that the ``solution'' edges arrive first). 

%Similarly, \cite{bestorder}
%present a protocol for proving non-bipartiteness, but they leave as an open question a protocol in their model for proving a graph
%\textit{is} bipartite. We obtain such a protocol in our model as a
%simple corollary of BFS.

\section{Model and Definitions}
Consider a data stream $\mathcal{A} = \langle a_1 , a_2 , \dots, a_m
\rangle$ with each $a_i$ in some universe $\mathcal{U}$. 
Consider a
probabilistic {\em verifier} $\verifier$ who observes $\mathcal{A}$ and a
deterministic {\em helper} $\helper$ who also observes $\mathcal{A}$ and can
send a message $\mathpzc{h}$ to $\verifier$ after $\mathcal{A}$ has
been observed by both parties.  
This message, also referred to as an
{\em annotation}, should itself be interpreted as a data stream that
is parsed by $\verifier$, which may permit $\verifier$ to use
space sublinear in the size of the annotation itself.  
That is, $\helper$ provides an annotation $\mathpzc{h}(\mathcal{A}) = (
\mathpzc{h}_1(\mathcal{A}), \mathpzc{h}_2(\mathcal{A}), \dots
\mathpzc{h}_\ell(\mathcal{A}) )$.  
%Throughout, we will use $\helper$
%and $\mathpzc{h}$ interchangeably to refer to the helper.

We study randomized streaming protocols for computing functions
$f(\mathcal{A}) \rightarrow \mathbb{Z}$. 
Specifically, assume $\verifier$ has access to a private
random string $\mathcal{R}$ and at most $w(m)$ machine words of working memory,
and that $\verifier$ has one-way access to the input 
$\mathcal{A} \cdot \mathpzc{h}$,
where $\cdot$ represents  concatenation. Denote the output of
protocol $\mathcal{P}$ on input $\mathcal{A}$, given helper
$\mathpzc{h}$ and random string $\mathcal{R}$, by $out(\mathcal{P},
\mathcal{A}, \mathcal{R}, \mathpzc{h})$.  We allow $\verifier$ to
output $\perp$ if $\verifier$ is not convinced that the annotation is
valid. We say that $\mathpzc{h}$ is \textit{valid} for
$\mathcal{A}$ with respect to $\mathcal{P}$ if
$Pr_{\mathcal{R}}(out(\mathcal{P}, \mathcal{A}, \mathcal{R},
\mathpzc{h}) = f(\mathcal{A})) =1 $, and we say that
$\mathpzc{h}$ is $\delta$-\textit{invalid} for $\mathcal{A}$ with
respect to $\mathcal{P}$ if $Pr_{\mathcal{R}}(out(\mathcal{P},
\mathcal{A}, \mathcal{R}, \mathpzc{h}) \neq \perp) \leq \delta$.  We
say that $\mathpzc{h}$ is a \textit{valid helper} if
$\mathpzc{h}$ is valid for all $\mathcal{A}$.  We
say that $\mathcal{P}$ is a valid protocol for $f$ if 

\begin{enumerate}
\item
%\noindent \hspace*{2mm} 1. 
There exists at least one valid helper $\mathpzc{h}$ 
with respect to $\mathcal{P}$ and 

\item
%\noindent \hspace*{2mm} 2.  
For all helpers $\mathpzc{h}' $ and all streams $\mathcal{A}$,
either $\mathpzc{h}'$ is valid for $\mathcal{A}$ or 
$\mathpzc{h}'$ is $\frac{1}{3}$-invalid for $\mathcal{A}$.
\end{enumerate}

\noindent
Conceptually, $\mathcal{P}$ is a valid protocol for $f$ if for each
stream $\mathcal{A}$ there is \textit{at least} one way to convince
$\verifier$ of the true value of $f(\mathcal{A})$, 
and $\verifier$ rejects all other
annotations as invalid (this differs slightly from \cite{icalp09} 
to allow for multiple $\mathpzc{h}$'s that can convince $\verifier$).
The constant $\frac{1}{3}$ can be any constant less than $\frac{1}{2}$. 
%is given for notational convenience
%but any constant could be used, a fact we implicitly use when needed.  

%% GRC: Where do we implicitly use this?

Let $\mathpzc{h}$ be a valid helper chosen to minimize
the length of $\mathpzc{h}(\mathcal{A})$ for all $\mathcal{A}$. 
We define the help cost $\hcost(\mathcal{P})$ to be the maximum length of
$\mathpzc{h}$ over all $\mathcal{A}$ of length $m$, and the
verification cost $\vcost(P)=w(m)$, the amount of working memory used by
the protocol $P$.  
All costs are expressed in machine words of size $\Theta(\log m)$ bits, i.e. we assume 
any quantity polynomial in the input size can be stored in a constant
number of words; in contrast, lower bounds are expressed in bits. 
We say that $\mathcal{P}$ is an $(h, v)$ protocol
for $f$ if $\mathcal{P}$ is valid and $\hcost(\mathcal{A} ) = O(h +
1)$, $\vcost(\mathcal{A} ) = O(v + 1)$.  
While both $\hcost$ and $\vcost$ are natural costs for such protocols, 
we often aim to achieve
a $\vcost$ of $O(1)$ and then minimize $\hcost$. 
In other cases, we show that $\hcost$ can be decreased by
increasing $\vcost$, and study the tradeoff between these two
quantities. 

In some cases, $f$ is not  
a function of $\mathcal{A}$ alone; instead it depends on
$\mathcal{A}$ \textit{and} $h$. 
In such cases, 
%we are using $\helper$ to simulate the execution of a
%non-streaming algorithm, and 
$\verifier$ should simply \textit{accept}
if convinced that the annotation has the correct properties, 
and output $\fail$ otherwise. 
We use the same terminology as before, and
say that $\mathcal{P}$ is a valid protocol if
there is a valid helper and any $\mathpzc{h}'$ that is not valid for
$\mathcal{A}$ is $\frac{1}{3}$-invalid for $\mathcal{A}$. 

\eat{
Since the original stream must be read,
generally linear-sized annotations is not an undue burden.  
However, one can
imagine settings where for example the helper charges by the bit sent,
and hence tradeoffs between $v$ and $h$ are also of interest.  }
%Although it is not a requirement that $\helper$ must repeat
%$\mathcal{A}$ as part of the protocol, in many cases our protocols 
%do resend a {\em permutation} of $\mathcal{A}$ with additional
%annotation. 
%To ensure that this is done accurately, we frequently use
%``fingerprints'' to compare the original and ``replayed'' streams. 

%Other related models are possible: in particular, \cite{icalp09}
%allows $\mathpzc{h}$ to be intermingled with $\mathcal{A}$
%arbitrarily. 
%However, it is unclear how much additional power this provides, since
%the protocols in \cite{icalp09} all fit into our model, where
%$\mathpzc{h}$ is provided after $\mathcal{A}$. 
%Our model avoids the question of how to
%ensure that the two streams of data and advice are synchronized,
%making it potentially more useful for practical settings. 
%Other motivations
%for requiring the annotation to appear after the stream are discussed in Section 1.

\eat{
One could consider alternative, possibly more powerful models of the
helper.  
For example, one might allow $\helper$ to provide information
concurrently with the data stream in an online fashion.  
Such a helper
might concievably require a smaller verification cost, since many of
our protocols require replaying all of the original data stream
(according to a convenient reordering) as part of the annotation.
Generally, we have not found ways to make such additional power
useful; for example, the online protocols of \cite{icalp09} can
generally also be used in our weak annotation model with minimal or no
changes.  
Finding ways to utilize such additional power remains a
point for future work; however, we feel that our model presents a
natural implementation point for how such schemes might be realized in
practice.
}

In this paper we primarily consider graph streams, which are streams
whose elements are edges of a graph $G$. 
More formally, consider a
stream $\mathcal{A} = \langle e_1 , e_2 , \dots, e_m \rangle$ with
each $e_i \in [n] \times [n]$.  
Such a stream defines a (multi)graph
$G = (V , E)$ where $V = \{v_1 , . . . , v_n \}$ and $E$ is the
(multi)set of edges that naturally corresponds to $\mathcal{A}$.  
We use the notation $\{i: m(i)\}$ for the multiset in which $i$ appears
with multiplicity $m(i)$.  
Finally, we will sometimes consider graph
streams with directed edges, and sometimes with weighted edges; in the
latter case each edge $e_i \in [n] \times [n] \times \mathbb{Z}_+$.
%\vspace{-0.05in}
\subsection{Fingerprints}
%\vspace{-0.05in}
Our protocols make careful use of 
\textit{fingerprints}, permutation-invariant hashes 
that can be efficiently computed in a streaming fashion. 
They determine in small space (with high probability) 
whether two streams have identical frequency distributions. 
They 
%and we use it to ensure that the helper provides consistent information to the verifier.
are the workhorse of algorithms proposed in earlier work on 
streaming models with an untrusted helper
 \cite{bestorder,icalp09,Lipton-old,Lipton-older}.
%Unlike this prior work
%\cite{bestorder,icalp09,Lipton-old,Lipton-older}, 
We
sometimes also need the fingerprint function to be linear. 
\begin{definition}[Fingerprints]
A fingerprint of a multiset $M = \{i : m(i)\}$ where each $i \in [q]$
  for some known upper bound $q$
is defined as a
  computation over the finite field with $p$ elements, $\mathbb{F}_p$,
as $\finger_{p,\alpha}(M) = \sum_{i=1}^q m(i) \alpha^i$, where $\alpha$ is chosen
uniformly at random from $\mathbb{F}_p$. 
We typically leave $p, \alpha$ implicit, and just write $\finger(M)$. 
\end{definition}
Some properties of $\finger$ are immediate:
it is linear in  $M$, and can easily be computed incrementally as
elements of $[q]$ are observed in a stream one by one. 
The main property of $\finger$ is that 
$\Pr[ \finger(M) = \finger(M') | M \neq M'] \leq q/p$ over the random
choice of $\alpha$ (due to  standard properties of
polynomials over a field).
Therefore, if  $p$ is sufficiently large, say, polynomial in $q$
and in an (assumed) upper bound on the multiplicities $m(i)$, then
this event  happens with only polynomially small probability. 
For cases when the domain of the
multisets is not $[q]$, we either establish a bijection to
$[q]$ for an appropriate value of $q$, or use a hash
function to map the domain onto a large enough $[q]$ such that
there are no collisions with high probability (whp).  
In all cases, $p$ is chosen to be $O(1)$ words. 

A common subroutine of many of our protocols forces $\helper$
to provide a ``label'' $l(u)$ for each node upfront, and then replay
the edges in $E$, 
with each edge $(u, v)$ annotated with $l(u)$ and $l(v)$
so that
%We present a protocol which allows $\verifier$ to be sure that
 each instance of each node $v$ appears with the {\em same} label $l(v)$. 
\begin{definition} \label{labeldef} 
We say a list of edges $E'$ is {\em label-augmented} if 
(a) $E'$ is preceded by a sorted list of all the nodes $v\in V$, each 
with a value $l(v)$ and $\deg(v)$, where $l(v)$ is the label of $v$
and $\deg(v)$ is claimed to be the degree of $v$; and
(b) each edge $e=(u, v)$ in $E'$ is annotated with a pair of
symbols $l(e,u)$ and $l(e,v)$.  
\noindent
We say a list of label-augmented edges $E'$ is valid if 
%(c) 
for all edges $e=(u,v)$, $l(e,u)=l(u)$ and $l(e,v)=l(v)$; and
%(d) 
$E'=E$, where $E$ is the set of edges observed in the stream $A$.
\end{definition}
\begin{lemma}[Consistent Labels] \label{labellemma}
% \textbf{``Correct Labels" Lemma:} 
There is a valid $(m, 1)$ protocol 
%with a $vcost$ of $O(\log m)$ 
that accepts any valid list of label-augmented edges.
\end{lemma}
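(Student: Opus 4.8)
The plan is to compose two fingerprint checks: one certifying that the replayed edge list $E'$ agrees with the streamed edge set $E$ as multisets, and one certifying that every incidence of a node in $E'$ carries the label declared for that node in the sorted prefix. Both run in $O(1)$ words; the annotation is the label-augmented list itself, of length $O(n+m)=O(m)$ (isolated vertices contribute nothing to either check and may be dropped from the node list, so we may take $n\le 2m$). Since the property ``$E'$ is a valid label-augmented list'' depends on both $\mathcal{A}$ and the annotation, we are in the accept-or-$\fail$ variant of the model: validity of the protocol amounts to exhibiting one message that $\verifier$ always accepts, and showing every other message is accepted with probability at most $\tfrac13$.

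Here is the protocol. While $\mathcal{A}$ streams past, $\verifier$ accumulates $\finger(E)$, viewing each edge $(u,v)\in[n]\times[n]$ as an element of $[n^2]$ under the obvious bijection, and also counts $m$. When $\helper$'s message arrives (prefixed by the node count $n$, so $\verifier$ knows where the edge portion begins), $\verifier$ reads the sorted list $\langle (v,l(v),\deg(v))\rangle_{v\in V}$, checking on the fly that the node identifiers strictly increase — so each $v$ occurs once with a well-defined $l(v)$ and $\deg(v)$ — and accumulates $\finger(M_1)$ for the multiset $M_1=\{\,(v,l(v))\text{ with multiplicity }\deg(v)\,\}$ over the pair-domain $[n]\times\mathcal{L}$, mapped into an interval $[q]$ (as in the remark after the definition of $\finger$, non-polynomial label symbols are first hashed collision-freely whp into a polynomially bounded alphabet $\mathcal{L}$). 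Then $\verifier$ reads the annotated replay $E'$: each record $((u,v),\,l(e,u),\,l(e,v))$ is fed into a running $\finger(E')$ via the edge $(u,v)$ and into a running $\finger(M_2)$ via the pairs $(u,l(e,u))$ and $(v,l(e,v))$, where $\finger(E')$ shares the evaluation point of $\finger(E)$ and $\finger(M_2)$ that of $\finger(M_1)$. Finally $\verifier$ accepts iff $\finger(E)=\finger(E')$ and $\finger(M_1)=\finger(M_2)$, and outputs $\fail$ otherwise; any syntactically malformed message (wrong length, unsorted prefix, a record with the wrong field count) is rejected deterministically.

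Completeness supplies the required always-accepted message: on the correct label-augmented list — consistent labels, $E'=E$, and each $\deg(v)$ the true degree — we have $E'=E$ and $M_2=M_1$ (each $v$ has exactly $\deg(v)$ incidences in $E'$, all labeled $l(v)$), so both tests pass with probability $1$. For soundness, take a well-formed message not of this form. If $E'\neq E$ as multisets, the first test fails except with probability $\le n^2/p$ by the collision bound for $\finger$. Otherwise $E'=E$ but some incidence of $E'$ is mislabeled, or some $\deg(v)$ is wrong; the key point is that, restricted to a fixed first coordinate $v$, $M_2$ is exactly the multiset of labels attached to the incidences of $v$ in $E'$, while $M_1$ restricted to $v$ is $\deg(v)$ copies of $(v,l(v))$ — so $M_1=M_2$ forces, for every $v$ at once, that $v$ has precisely $\deg(v)$ incidences in $E'$ all carrying $l(v)$. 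Hence $M_1\neq M_2$ here, and the second test fails except with probability $\le q/p$. Taking the prime $p$ to be a large enough polynomial in $n$ and $m$ (in particular exceeding all relevant multiplicities, which are $O(m)$, and still $O(1)$ words) makes the total error $n^2/p+q/p\le\tfrac13$. As no message is accepted with probability strictly between $\tfrac13$ and $1$, this is a valid $(m,1)$ protocol.

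The one place that needs care is the soundness of the label check: one must see that a single $\finger$ over $(\text{node},\text{label})$ pairs, compared against the degree-weighted $M_1$, simultaneously certifies that every incidence of $v$ is labeled $l(v)$ \emph{and} that the declared degrees are the true multiplicities — so the protocol in fact certifies a touch more than the bare definition of ``valid,'' which is harmless and is exactly what the later protocols consume. The rest is routine: shared randomness within each fingerprint pair, choosing $p$ past all multiplicities, handling isolated vertices, and $O(1)$-word parsing of the sorted prefix.
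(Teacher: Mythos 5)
Your proposal is correct and follows essentially the same route as the paper's own proof: one fingerprint test comparing the degree-weighted multiset $\{(v,l(v)):\deg(v)\}$ from the node prefix against the $(u,l(e,u))$ pairs read off the edge replay, plus a second fingerprint test that $E'=E$. The additional details you supply (parsing the sorted prefix, the explicit soundness accounting, choice of $p$) are refinements of the same argument, not a different approach.
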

\begin{proof} 
$\verifier$ uses the annotation from  Definition \ref{labeldef}~(a) to 
make a fingerprint of the multiset $S_1 :=\{(u, l(u)): \deg(u)\}$. 
$\verifier$ also maintains a fingerprint $f_1$ of all 
$(u, l(e,u))$ pairs seen while observing the edges of $L$.
If $f_1 = \finger(S_1)$ then (whp) each node $u$
must be presented with label $l(e,u)=l(u)$ 
every time it is reported in an edge $e$
(and moreover $u$ must be reported in exactly $\deg(u)$ edges),
else the multiset of observed (node, label) pairs 
would not match  $S_1$.  
Finally, $\verifier$ ensures that $E'=E$
by checking that $\finger(E) =\finger(E')$. 
%\qed
\end{proof}
\section{Directly Proving Matching Upper and Lower Bounds}
\label{sec:direct}
\subsection{Warmup: Topological Ordering and DAGs}
A (directed) graph  $G$ is a DAG if and only if $G$ has a
\textit{topological ordering}, which is  
an ordering of $V$ as $v_1, \dots v_n$ such that for every
edge $(v_i, v_j)$ we have $i < j$ \cite[Section 3.6]{KandT}.
Hence, if $G$ is a DAG, $\helper$ can prove it by providing a
topological ordering. If $G$ is not a DAG, $\helper$ 
can provide a directed cycle as witness.
\begin{theorem} \label{DAGthrm} There is a valid $(m, 1)$ protocol to
  determine if a graph is a DAG. 
\end{theorem}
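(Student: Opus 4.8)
The plan is to exploit the characterization already noted in the excerpt: $G$ is a DAG if and only if $V$ admits a topological ordering, and if $G$ is \emph{not} a DAG then $E$ contains a directed closed walk of positive length (in particular a simple cycle). So the annotation will begin with one symbol in which $\helper$ declares ``DAG'' or ``not DAG''; $\verifier$ then expects a proof in the corresponding format, outputs the declared answer if every check passes, and outputs $\fail$ otherwise. Both sub-protocols will use annotation of length $O(m)$ and only $O(1)$ words of memory, giving an $(m,1)$ protocol.

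If $\helper$ claims $G$ is a DAG, it sends a label-augmented list of edges (Definition~\ref{labeldef}) in which the label $l(v)$ of each node is its position in a topological order. $\verifier$ runs the Consistent Labels protocol of Lemma~\ref{labellemma} to confirm (whp) that the replayed edges are exactly $E$ and that every occurrence of a node carries its announced label, and in addition performs the purely local test $l(e,u) < l(e,v)$ on each replayed edge $e=(u,v)$, which needs no memory beyond parsing the current edge. Note that injectivity of $l$ need not be checked: \emph{any} integer labeling that strictly increases along every edge is already a certificate of acyclicity, since a directed cycle would force $l(w) < l(w)$. Completeness is immediate; for soundness, if $G$ contains a cycle then, conditioned on Lemma~\ref{labellemma} accepting (so $E'=E$ whp), that cycle is among the replayed edges and the comparison test fails on one of them, so $\verifier$ outputs $\fail$ whp.

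If $\helper$ claims $G$ is not a DAG, it sends two edge lists: $C$, claimed to spell out a closed walk $w_0 \to w_1 \to \cdots \to w_k = w_0$ with $k\ge 1$, and $L$, claimed to be the multiset $E$ with the edges of $C$ deleted. Using linearity of $\finger$, $\verifier$ checks $\finger(C)+\finger(L)=\finger(E)$, which certifies (whp) that the multiset underlying $C$ together with $L$ equals $E$, hence that every edge listed in $C$ lies in $E$; and it checks that $C$, read in order, is a closed walk, remembering only $w_0$ and the head of the previous edge. If both checks pass, $G$ contains a closed walk of positive length and so is not a DAG. The step I expect to be the main thing to get right is exactly this pair of soundness arguments: a dishonest $\helper$ can neither smuggle in a nonexistent cycle --- when $G$ is genuinely a DAG, $E$ contains no closed walk, so whenever the order test on $C$ would pass, the listed edges cannot be a sub-multiset of $E$ and the linearity identity fails whp --- nor hide a real one, which is blocked by the per-edge comparison together with the $E'=E$ guarantee from Lemma~\ref{labellemma}. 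Choosing the field size $p=\mathrm{poly}(n)$ drives the total fingerprint error well below $1/3$. Finally $\hcost=O(m)$ (a simple cycle has at most $\min(n,m)\le m$ edges and $|L|\le m$) and $\vcost=O(1)$, as required.
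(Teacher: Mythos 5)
Your proposal is correct and follows essentially the same route as the paper: a topological ordering certified via the Consistent Labels protocol with a per-edge comparison in the DAG case, and a cycle (closed walk) $C$ plus the remaining edges $E\setminus C$ verified by a fingerprint identity in the non-DAG case. The extra observations you add (no need to check injectivity of the labels, and the explicit soundness argument that a fake cycle cannot pass the multiset check when $G$ is acyclic) are consistent with, and slightly more detailed than, the paper's own argument.
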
 
\begin{proof} 
If $G$ is not a DAG, $\helper$ provides 
%the edges in 
a directed cycle 
$C$ as $(v_1, v_2),$ $(v_2, v_3)$ $\ldots$ $(v_k, v_1)$.
To ensure  $C \subseteq E$, $\helper$ then provides $E \setminus
C$, allowing $\verifier$ to 
check that $\finger(C \cup (E \setminus C)) = \finger(E)$.
%check that this matches the fingerprint of $E$.

\noindent
If $G$ is a DAG, let  $v_1, \dots v_n$ be a topological ordering of $G$. 
%with the $i$th node in the ordering,
 %$v_i$, annotated with its total degree $deg(v_i)$. 
%This allows $\verifier$ to fingerprint the multiset $S_1 :=\{(i, v_i): deg(v_i)\}$. 
We require $\helper$ to replay the edges of $G$, with 
edge $(v_i, v_j)$ annotated with the ranks of $v_i$ and $v_j$
i.e. $i$ and $j$. We ensure $\helper$ provides consistent ranks 
via the Consistent Labels protocol of Lemma \ref{labellemma}, with the
ranks as ``labels''. If any edge $(v_i, v_j)$ is presented with $j > i$,
$\verifier$  rejects immediately.%\qed%
%$\verifier$ uses a fingerprint to ensure that edges presented match
%$E$ exactly. 
%$\verifier$ also maintains a fingerprint $f_1$ of all 
%$(i, v_i)$ pairs seen as edge heads or edge tails.
%After processing the message from $\helper$, 
%comparing $f_1$ to the fingerprint of $S_1$ 
%and $f_2$ to the fingerprint of $S_2$.
%If either fingerprint test fails, $\verifier$ will reject, else
%$\verifier$ accepts. 
%Correctness follows by observing that if $E=E'$, then all edges are
%presented correctly, and each node must be presented with
%the same rank every time it is reported in an edge, 
%else the multiset of observed (node, rank) pairs 
%would not match that given by $S_1$. 
\end{proof}
%\vspace{-0.05in}
\subsection{Maximum Matching}
%\vspace{-0.05in}
We give an $(m, 1)$ protocol for maximum matching which
leverages the combinatorial structure of the problem. 
Previously, matching was only  studied in the bipartite 
case, where an $(m, 1)$ protocol and a lower bound of
$\hcost \cdot \vcost = \Omega(n^2)$ bits for dense graphs were shown 
\cite[Theorem 11]{icalp09}.
The same lower bound applies to the more general problem of maximum
matching, so our protocol is tight up to logarithmic factors. 

The protocol shows matching upper and lower bounds on the size of the
maximum matching. 
Any feasible matching presents a lower bound. 
For the upper bound 
we appeal to the Tutte-Berge formula \cite[Chapter 24]{textbook}: 
%The formula states that 
%We want to prove matching upper and lower bounds on the size of the
%maximum matching. Upper bounds are trivial to obtain, as any feasible
%matching provides one. However, it is not as obvious how to prove that
%large matchings do \textit{not} exist. To do so, we use the well-known
%Tutte-Berge formula [insert citation here], which we state as Lemma
%\ref{tutte-berge}.  
%\begin{lemma}
%\label{tutte-berge}
the size of a maximum matching of a graph $G = (V, E)$ is equal to
%
%\smallskip
%\centerline{
$ \frac{1}{2} \min_{V_S\subseteq V}
  (|V_S|-\occ(G-V_S)+|V|),$
%}\smallskip \noindent
  where $G-V_S$ is the subgraph of $G$ obtained by deleting the vertices
  of $V_S$ and all edges incident to them, and
  $\occ(G-V_S)$ is the number of components in the graph $G-V_S$
  that have an odd number of vertices.
%\end{lemma}
So for any set of nodes $V_S$, $\frac{1}{2} (|V_S|-\occ(G-V_S)+|V|)$ is 
an upper bound on the size
of the maximum matching, and there exists some $V_S$ for
which this quantity equals the size of a maximum matching $M$. 
Conceptually, providing both $V_S$ and $M$, $\helper$ 
proves that the maximum matching size is $M$. 
Additionally, $\helper$ has to provide a proof of the value of 
$\occ(G-V_S)$ to $\verifier$. 
%We omit a full proof; 
%The proof of the theorem is given in Appendix~\ref{app:matching}.
%the main technique needed is careful application of fingerprints. 

\begin{theorem} \label{maxmatching}
There is a valid $(m, 1)$ protocol for maximum matching. Moreover, any
protocol for max-matching requires $\hcost \cdot \vcost = \Omega(n^2)$ bits.
%, and any online protocol requires $hv = \Omega(n^2)$.
\end{theorem}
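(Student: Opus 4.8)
The plan is to have $\helper$ certify matching lower and upper bounds on the size of a maximum matching, exactly as the discussion above suggests. The lower bound is the easy direction: $\helper$ names a matching $M$. Reusing the Consistent Labels machinery of Lemma~\ref{labellemma}, $\helper$ replays $E$ as a valid label-augmented edge list (Definition~\ref{labeldef}) in which the edges of $M$ are additionally flagged; since $\verifier$ has already checked $E'=E$, every flagged edge lies in $E$, so $M\subseteq E$. To see that $M$ is genuinely a matching, $\helper$ also lists the set $U$ of $M$-saturated vertices, and $\verifier$ fingerprints the multiset of endpoints of the flagged edges and checks it equals $\finger(U)$ with every element of multiplicity one; as the endpoint multiset has $2|M|$ elements, coincidence with a \emph{set} forces no vertex to be covered twice, and $|M|=|U|/2$ is a valid lower bound on the maximum matching size.

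For the upper bound I appeal to the Tutte--Berge formula quoted above: it suffices for $\helper$ to name a set $V_S\subseteq V$ and prove $\tfrac12(|V_S|-\occ(G-V_S)+n)=|M|$. Since $n$ is known and $\verifier$ reads $|V_S|$ off the sorted list of $V_S$ that $\helper$ supplies, the whole difficulty is certifying $\occ(G-V_S)$. To do this, $\helper$ declares the number $k$ of components of $G-V_S$ and labels each $v\notin V_S$ with a component id $c(v)\in[k]$, a parent $p(v)$, and a depth $d(v)$, designating in each component a unique root with $d=0$; vertices of $V_S$ get a distinguished label. $\helper$ replays the label-augmented edges carrying these labels, flagging a set $T$ of ``tree edges,'' and $\verifier$ checks: (i) every replayed edge $(u,v)$ with $u,v\notin V_S$ has $c(u)=c(v)$ — this forces each true connected component of $G-V_S$ into a single claimed component, so $\helper$ cannot \emph{split} a component; and (ii) each flagged edge $(v,p(v))$ has both endpoints outside $V_S$ with equal component ids and $d(v)=d(p(v))+1$, and (via a fingerprint of the children of $T$-edges against the set of non-root vertices of $V\setminus V_S$, whose fingerprint $\verifier$ builds from the preamble) every non-root vertex is the child in exactly one $T$-edge. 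Conditions~(ii) exhibit inside each claimed component a rooted spanning tree whose parent pointers strictly decrease in depth and therefore reach the unique root, so each claimed component is connected; hence $\helper$ cannot \emph{merge} two components either. Thus the claimed components are exactly the components of $G-V_S$.

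It remains to count odd components. $\helper$ lists the claimed sizes $s(1),\dots,s(k)$ in order, from which $\verifier$ reads $\occ$ as the number of odd entries (one counter) and checks $\sum_c s(c)=n-|V_S|$; to certify the $s(c)$ themselves, $\helper$ also gives each $v\notin V_S$ a within-component rank $r(v)\in[s(c(v))]$, and $\verifier$ fingerprints $\{(c(v),r(v)):v\notin V_S\}$ and compares it with $\finger(\{(c,j):c\in[k],\,1\le j\le s(c)\})$, which it evaluates incrementally in $O(1)$ words using the closed form of the inner geometric sum in $\alpha$ as it reads $s(1),\dots,s(k)$. Equality forces each $s(c)$ to be the true number of vertices labelled $c$, i.e. the true size of the $c$-th component; a similar fingerprint over the roots against $\{1,\dots,k\}$ pins down that there is one root per component. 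Finally $\verifier$ outputs $|M|$ iff $2|M|=|V_S|-\occ+n$, and $\fail$ otherwise. Every step runs in $O(1)$ words and the annotation is $O(m)$ words (dominated by the two replays of $E$), so this is an $(m,1)$ protocol; completeness holds because Tutte--Berge guarantees a $V_S$ attaining the optimum, and soundness reduces to soundness of Lemma~\ref{labellemma} and of fingerprinting. I expect the certification of $\occ(G-V_S)$ to be the main obstacle: one must simultaneously forbid splitting and merging of true components and verify all component-size parities in $O(1)$ memory, and getting the fingerprint bookkeeping for the size check to go through is the delicate part.

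For the lower bound I would reduce from bipartite perfect matching: a bipartite graph with parts of size $n/2$ has a perfect matching iff its maximum matching has size $n/2$, so any $(h,v)$ protocol for maximum-matching size yields an $(h,v)$ protocol for bipartite perfect matching, for which $\hcost\cdot\vcost=\Omega(n^2)$ bits was proved in \cite[Theorem 11]{icalp09}; the bound transfers verbatim. (Alternatively, the same bound follows directly from the standard reduction to the \textsc{Index} / set-disjointness communication problem.)
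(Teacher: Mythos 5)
Your proposal is correct, and its skeleton is the paper's: certify the matching size from below by exhibiting a matching $M$ and checking via fingerprints that $M\subseteq E$ and no vertex is repeated, from above by exhibiting a Tutte--Berge set $V_S$ and certifying $\occ(G-V_S)$, and inherit the $\Omega(n^2)$ bound on $\hcost\cdot\vcost$ from bipartite perfect matching \cite[Theorem 11]{icalp09}. Where you diverge is the certification of the components of $G-V_S$. The paper has $\helper$ present each component $C_i=(V_i,E_i)$ contiguously (sorted $V_i$ with degrees, then $E_i$), uses fingerprints to force $E_i\subseteq V_i\times V_i$, invokes the connectivity protocol of \cite[Theorem 5.6]{icalp09} as a black box on each $C_i$, and then checks disjointness and absence of cross edges with two more fingerprint tests plus Consistent Labels; because each $V_i$ arrives as a block, the parity of $|V_i|$ is read off directly with $O(1)$ reused memory, so no size certification is needed. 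You instead label every vertex with a component id, parent, and depth, verify consistency with Lemma~\ref{labellemma}, and argue no-split via equal ids on every non-$V_S$ edge and no-merge via the strictly depth-decreasing parent pointers terminating at a unique per-component root --- essentially inlining the connectivity certificate rather than citing it --- and, since your presentation is vertex-major rather than component-major, you must add the rank-fingerprint gadget (comparing $\{(c(v),r(v))\}$ against $\{(c,j):j\le s(c)\}$ via incrementally evaluated geometric sums) to pin down the component sizes and hence the parities. Both routes are sound and give $(m,1)$; the paper's buys modularity and gets the odd count for free from the block structure, while yours is self-contained and avoids the external connectivity protocol at the cost of the extra size-certification machinery. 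The only detail to make explicit is that the saturated-vertex list $U$ (like the paper's $V'_M$) must be presented in sorted order so that distinctness can be checked in constant space.
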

\begin{proof} To prove a lower bound of $k$ on the size of the maximum matching,
$\helper$ provides a matching $M = (V_M, E_M)$ of size $|E_M|=k$, 
and then proves that $M$ is indeed a matching. 
It suffices to prove that $|V_M| = 2|E_M|$ and 
%no node is incident to more than one edge in $M$; and
$M \subseteq E$.
First, $\helper$ lists $E_M$, and $\verifier$
fingerprints the nodes present as $f(V_M)$. 
$\helper$ then presents $V'_M$ which is claimed to be
$V_M$ in sorted order, allowing
$\verifier$ to easily check no node appears more than once
and that $\finger(V_M) = \finger(V'_M)$. 
Next, $\helper$ provides $E \setminus M$, 
allowing $\verifier$ to check that
$\finger(M \cup (E \setminus M)) = \finger(E)$.
Hence $M$ is a matching.

% [New comment: In next paragraph, we seemed to be using $C_i$ to 
 %denote just the nodes in the component (which we later call $V_i$), 
 %and later we use $C_i$ to denote a subgraph i.e. $V_i$ \textit{and} 
 %$E_i$). Am changing this in the hopes of making it clearer. Let me know
 %if I've only obscured this further.]
To prove an upper bound of $k$ on the size of the maximum matching,
$\helper$ sends a (sorted) set $V_S \subseteq V$, 
where $\frac{1}{2}(|V_S|-\occ(G-V_S)+|V|) = k$. 
Both $|V_S|$ and $|V|$ are computed directly; for $\occ(G-V_S)$, 
 $\helper$ sends a sequence  
of (sub)graphs $C_i =(V_i, E_i) \subseteq V \times E$ claimed to be a partition of $G-V_S$ into
connected components.
$\verifier$ can easily compute $c$, the number of $C_i$'s with an 
odd number of nodes.
To ensure that the $C_i$'s are indeed the connected
components of $G-V_S$, it suffices to show that
(a) each $C_i$ is connected in $G-V_S$;
(b) $V \setminus V_S$ is the disjoint union of the $V_i$'s; 
and 
(c) there is no edge $(v,w) \in E$ s.t. $v \in V_i, w \in V_j, i \neq j$.

To prove Property (a), $\helper$ presents the (sub)graph $C_i$ as 
$V_i \subset V$ 
(in sorted order) where each $v$ is paired with its degree $\deg(v)$;
followed by $E_i \subset E$ (in arbitrary order). 
Fingerprints are used to ensure that the multiset of nodes present in
$E_i$ matches the claimed degrees of  nodes in $V_i$. 
If these fingerprints agree, then (whp) 
$E_i \subseteq V_i \times V_i$.
Then $\helper$ uses the connectivity protocol from \cite[Theorem 5.6]{icalp09}
on the (sub)graph $C_i = (V_i, E_i)$ to prove that $C_i$ is connected.  
 Each of these checks on $C_i$ has $\hcost$ $O(|E_i|)$.
%\item First $\helper$ lists the nodes in $C_i$. 
%\item Second, $\helper$ sends a list $E_i$ of all edges between nodes in $C_i$. 
%To ensure all edges in $E_i$ involve only nodes in $C_i$, we use 
%the multiset inclusion protocol of \cite[Lemma 2.2]{icalp09}. Each instance of
% this protocol has $hcost$ $O(|C_i| \log |C_i|)$.
%\\\\ 
Note that $\verifier$ requires only a constant number of fingerprints
for these checks, and can use the same working memory for each
different $C_i$ to check that
$E_i \subseteq V_i \times V_i$ and that $C_i$ is connected. 
The total $\vcost$ over all $C_i$ is a constant number of fingerprints; 
the total $\hcost$ is $O(m)$. 

%may reuse working memory each time Steps 2 and 3 are executed, so the total $vcost$ 
%over all these steps for all $C_i$ is $O(\log m)$; the total $hcost$ is $O(m+n \log n)$. 
%\end{enumerate}

%[Comment: In the paragraph below, we do not want to check that  
%$E \setminus (\cup_i E_i) \subset V_S \times V_S$, as written in the last version (!).
%Some edges not in any of the $E_i$'s might have only \textit{one} node in $S$. What
%am I now proposing, for clarity, is that we prove each edge in $E \setminus (\cup_i E_i)$
%has \textit{at least} one node in $S$ by using a ``consistent labelling" subprotocol that
%is also used in out Topological Ordering protocol. This is somewhat inefficient, since we don't
%actually have to replay all of $E$, just $E \setminus (\cup_i E_i)$. But it doesn't affect the 
%asymptotics and should make things clearer.]
Property (b) is checked by testing
$\finger\big((\cup_i V_i) \cup V_S\big) = \finger(V)$, where the unions in the LHS
count multiplicities; if the fingerprints match then whp $V \setminus V_S$ is the
 \textit{disjoint} union of the $V_i$'s.
For (c), it suffices to ensure that each each edge in $E \setminus(\bigcup_i E_i)$ 
is incident to at least one node in $V_S$, as we have already checked that no edges
in $\bigcup_i E_i$ cross between $V_i$ and $V_j$ for $i \neq j$.
To this end, we use the ``Consistent Labels" protocol of Lemma \ref{labellemma}, with $l(u)=1$ indicating
$u \in V_S$ and $l(u)=0$ indicating $u \notin V_S$, to force $\helper$ to replay all of $E$ with each edge $(u, v)$ 
annotated with $l(u)$ and $l(v)$. This allows $\verifier$
to identify the set $E_S$ of edges incident to at least one node in $V_S$.
$\verifier$ checks that $\finger\big((\cup_i E_i) \cup E_S\big)=\finger(E)$, which ensures
(whp) that Property (c) holds and that over the entire partition of $G$ no edges are added or omitted.
Finally, provided all the prior fingerprint tests pass, the protocol
accepts if $c$, the number of $C_i$'s with an odd number of nodes, 
satisfies $\frac{1}{2}(|S|-c+|V|) = k$.
%This is done as follows.
%After all $C_i$'s have been presented, $\helper$ provides a set of edges $E'$ claimed to equal $E(G) \setminus E(G-S)$;
%$\verifier$ first ensures that $\big(\cup_i E_i\big) \cup E' = E$ using fingerprints.
%All that is left is to make sure that all edges in $E'$ are actually in $E(G) \setminus E(G-S)$ i.e.
%no edge in $E'$ connects two nodes in $V(G-S)$. 
%This can be done by requiring $\helper$ to provide the 
%degree of each node in $S$ when $S$ is first presented,  allowing $\verifier$ to fingerprint the multiset 
%$S' := \{v \in S; deg(v)\}$. Then when $\helper$ presents the edges in $E'$, 
%we require her to specify for each edge which (possibly both) of the nodes in the edge are in $S$; 
%$\verifier$ ensures honesty using the fingerprint of $S'$.
%\qed
\end{proof}

%
%\vspace{-0.05in}
\subsection{Linear Programming and TUM Integer Programs}
%\vspace{-0.05in}
We present protocols to solve linear programming problems in our model
leveraging the theory of LP duality. 
This leads to  non-trivial schemes for a 
variety of graph problems.
% with totally unimodular integer programming
%formulations, including shortest $s-t$ path, max-flow, min-cut, and 
%minimum-weight bipartite perfect matching. 
%Moreover, the latter three protocols are tight.

\begin{definition} 
Given a data stream $\mathcal{A}$ containing 
entries of vectors $\mathbf{b} \in \mathbb{R}^b$, $\mathbf{c} 
\in \mathbb{R}^c$, and non-zero entries of a $b \times c$ matrix 
$A$ in some arbitrary order, possibly interleaved.
Each item in the stream indicates the index of the object
it pertains to.  
The LP streaming problem on $\mathcal{A}$ 
is to determine the value of the linear 
program  $\min \{\mathbf{c}^T \mathbf{x} \mbox{ } | \mbox{ }
A\mathbf{x} \leq \mathbf{b}$\}. 
\end{definition}

We present our protocol as if each entry of each
object appears at most once (if an entry does not appear, 
it is assumed to be zero). 
When this is not the case, 
%More generally, our protocols work when a value for an entry may appear many times: 
the final value for that entry is interpreted as 
the {\em sum} of all corresponding values in the stream. 
%We assume that all arithmetic can be done using a constant number of
%machine words to represent each value. 

\begin{theorem} \label{LPtheorem}
There is a valid $(|A|, 1)$ protocol for the LP streaming problem,
where $|A|$ 
is the number of non-zero entries in the constraint matrix $A$ of $\mathcal{A}$.
\end{theorem}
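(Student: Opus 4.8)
The plan is to have the helper exhibit matching primal and dual optimal solutions and walk the verifier through certifying both, using fingerprints to cope with the verifier's $O(1)$ memory and the arbitrary, interleaved arrival order of the stream. Write the primal as $\min\{\mathbf{c}^T\mathbf{x} : A\mathbf{x}\le\mathbf{b}\}$; its Lagrangian dual is $\max\{-\mathbf{b}^T\mathbf{y} : A^T\mathbf{y}=-\mathbf{c},\ \mathbf{y}\ge 0\}$, and weak duality gives $\mathbf{c}^T\mathbf{x}\ge-\mathbf{b}^T\mathbf{y}$ for every primal-feasible $\mathbf{x}$ and dual-feasible $\mathbf{y}$. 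So if $\helper$ first announces a value $\gamma$ and then supplies vectors $\mathbf{x},\mathbf{y}$ with $A\mathbf{x}\le\mathbf{b}$, $\mathbf{y}\ge 0$, $A^T\mathbf{y}=-\mathbf{c}$, and $\mathbf{c}^T\mathbf{x}=-\mathbf{b}^T\mathbf{y}=\gamma$, then strong duality guarantees such a pair exists at the optimum and forces $\gamma$ to equal the LP value. (If the instance is infeasible or unbounded, $\helper$ instead supplies the corresponding Farkas certificate or unbounded ray together with a feasible point; these are checked by the same machinery, and I would dispatch this case in a short remark.)

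The verifier must then check, using $O(1)$ words: (i) $\mathbf{x}$ and $\mathbf{y}$ are the vectors claimed and $\mathbf{y}\ge 0$; (ii) $A\mathbf{x}\le\mathbf{b}$; (iii) $A^T\mathbf{y}=-\mathbf{c}$; (iv) $\mathbf{c}^T\mathbf{x}=-\mathbf{b}^T\mathbf{y}=\gamma$. While reading the original stream $\mathcal{A}$, $\verifier$ records $\finger(A)$ (as a fingerprint of the multiset of $(i,j,\text{value})$ triples, with repeated entries summed as in the theorem's convention), $\finger(\mathbf{b})$, $\finger(\mathbf{c})$, and fingerprints of the per-row and per-column nonzero-count multisets $\{i:\mathrm{rowdeg}(i)\}$ and $\{j:\mathrm{coldeg}(j)\}$. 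For (i), $\helper$ lists $\mathbf{x}$ and $\mathbf{y}$ as (index, value) pairs in index order; $\verifier$ checks sortedness (so no coordinate is missing or repeated), checks $\mathbf{y}\ge 0$ entrywise on the fly, and saves $\finger(\mathbf{x}),\finger(\mathbf{y})$ — viewing rationals as scaled integers, which is legitimate since a basic optimal solution (and the intermediate inner products) has bit-length polynomial in the input and thus fits in $O(1)$ machine words.

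For (ii), $\helper$ replays every nonzero of $A$ grouped by row, annotating each triple $(i,j,A_{ij})$ with the value $x_j$; $\verifier$ accumulates $S_i=\sum_j A_{ij}x_j$ within a row, is told $b_i$ at the end of the block and checks $S_i\le b_i$, and simultaneously (a) re-fingerprints the replayed triples and the reported $(i,b_i)$ pairs and compares against the values saved from the stream, confirming exactly the true $A$ and $\mathbf{b}$ were replayed; and (b) fingerprints the pairs $(j,x_j)$ seen across all entries and checks equality with $\finger(\{(j,x_j):\mathrm{coldeg}(j)\})$ computed from the committed $\mathbf{x}$ and the saved column degrees — this is precisely the guarantee of the Consistent Labels protocol (Lemma \ref{labellemma}) applied with $x_j$ as the ``label'' of column $j$, forcing every occurrence of column $j$ to carry the committed value. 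Step (iii) is symmetric: replay $A$ grouped by column with each entry annotated by $y_i$, accumulate $\sum_i A_{ij}y_i$ per column, receive $c_j$ at each block's end and check it equals $-c_j$, with analogous fingerprint checks on $A$, $\mathbf{c}$, and the labels $y_i$. Step (iv): replay $\mathbf{c}$ (resp. $\mathbf{b}$) in index order with each $c_j$ (resp. $b_i$) annotated by $x_j$ (resp. $y_i$), accumulate $\mathbf{c}^T\mathbf{x}$ and $\mathbf{b}^T\mathbf{y}$ with label-consistency against the committed $\mathbf{x},\mathbf{y}$ and value-consistency against $\finger(\mathbf{c}),\finger(\mathbf{b})$, and verify $\mathbf{c}^T\mathbf{x}=\gamma=-\mathbf{b}^T\mathbf{y}$. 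Every replay has length $O(|A|)$ (rows or columns of $A$ with no nonzero, and the corresponding entries of $\mathbf{b},\mathbf{c}$, are trivial and may be assumed absent, so $b,c=O(|A|)$), and $\verifier$ keeps only a constant number of fingerprints and running sums, yielding an $(|A|,1)$ protocol; soundness is a union bound over the $O(1)$ fingerprint comparisons, each failing with probability $O(|A|\cdot\mathrm{poly}/p)$, driven below $1/3$ by choosing $p$ a large enough polynomial.

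The main obstacle is not depth but orchestration: once the LP-duality framing is fixed, the real content is the bookkeeping that lets a memory-less verifier confirm, in one streamed pass over $\helper$'s message, both that the replayed $A,\mathbf{b},\mathbf{c}$ are faithful copies of the stream and that the interleaved coordinate annotations are globally consistent with a single committed $\mathbf{x}$ and a single committed $\mathbf{y}$. That consistency is exactly what Lemma \ref{labellemma} provides, so the proof reduces to applying it to matrix entries rather than graph edges, plus weak and strong duality. I expect the only point needing genuine care is the polynomial bit-length bound on the certificate and the intermediate dot products, so that everything fits in $O(1)$ words and can be fingerprinted faithfully.
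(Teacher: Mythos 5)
Your proposal is correct and follows essentially the same route as the paper's proof: the helper exhibits matching primal- and dual-feasible solutions, the verifier fingerprints $S_A$, $S_B$, $S_C$ from the stream, and feasibility is checked by replaying $A$ row-by-row (and column-by-column for the dual) with each entry annotated by the relevant $\mathbf{x}_j$ or $\mathbf{y}_i$, with consistency of these annotations enforced exactly as in Lemma~\ref{labellemma}. Your added remarks on the explicit dual form, infeasible/unbounded certificates, and bit-length of the certificates are care the paper glosses over, but they do not change the argument.
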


\begin{proof} 
The protocol shows an upper bound by providing a primal-feasible
solution $\mathbf{x}$, and a lower bound by providing a dual-feasible
solution $\mathbf{y}$. 
When the value of both solutions match, $\verifier$ is convinced that
the optimal value has been found. 

From the stream, $\verifier$ fingerprints the sets 
$S_A = \{(i, j, A_{i,j})\}$, 
$S_B = \{(i, \mathbf{b}_i)\}$ and 
$S_C = \{(i, \mathbf{c}_j)\}$. 
Then $\helper$ provides all pairs of values $\mathbf{c}_j, \mathbf{x}_j, 1 \leq j \leq c$,
with each  $\mathbf{x}_j$ additionally annotated with $|A_{\cdot j}|$,  
the number of non-zero entries in column $j$ of $A$. 
This allows $\verifier$ to fingerprint the multiset 
$S_X = \{(j, \mathbf{x}_j): |A_{\cdot j}|\}$
 and calculate 
 the solution cost $\sum_{j=1}^{b} \mathbf{c}_j \mathbf{x}_j$. 

To prove feasibility, for each row $i$ of $A$, $A_{i \cdot}$,
$\helper$ sends $\mathbf{b}_i$, then
(the non-zero entries of) $A_{i \cdot }$ so that $A_{ij}$ is annotated with
$\mathbf{x}_j$. 
This allows the $i$th constraint to be checked easily in constant
space. 
$\verifier$ fingerprints the values given by $\helper$ for 
$A$,  $\mathbf{b}$, and $\mathbf{c}$, and compares them to those for
the stream. 
A single fingerprint of the multiset of values presented for
$\mathbf{x}$ over all rows is compared to $\finger(S_X)$.
% to ensure that the claimed values are consistent.
The protocol accepts $\mathbf{x}$ as feasible if all constraints are
met and all fingerprint tests pass. 

Correctness follows by observing that 
the agreement with $\finger(A)$ guarantees (whp) that each entry of
$A$ is presented correctly and no value is omitted.
%; the order in which entries are presented ensures none are duplicated.  
Since $\helper$ presents
 each entry of $\mathbf{b}$ and $\mathbf{c}$ once, in index order, 
the fingerprints $\finger(S_B)$ and $\finger(S_C)$
ensure that these values are presented correctly. 
The claimed $|A_{\cdot j}|$ values must be correct: if not, then
the fingerprints of either $S_X$ or $S_A$ will not match the 
multisets provided by $\helper$. 
$\finger(S_X)$ also ensures that each time $\mathbf{x}_j$ is
presented, the same value is given (similar to Lemma \ref{labellemma}). 

To prove that $\mathbf{x}$ is primal-optimal, 
it suffices to show a feasible solution $\mathbf{y}$ 
to the dual $A^T$ so that $c^T \mathbf{x} = b^T \mathbf{y}$. 
Essentially we repeat the above protocol on the dual, 
and check that the claimed values are again consistent
with the fingerprints of $S_A, S_B, S_C$.
% respectively. 
%\qed
\end{proof}

For any graph problem that can be formulated as a linear program in 
which each entry of $A$, $\mathbf{b}$, and $\mathbf{c}$ can be derived
as a linear function of the nodes and edges, 
we may view each edge in a graph stream $\mathcal{A}$ as providing  
an update to values of one or more entries of $A$, $\mathbf{b}$, and
$\mathbf{c}$. 
Therefore, we immediately obtain a protocol for problems of this form
via Theorem \ref{LPtheorem}.  
More generally, we obtain protocols for problems formulated as
\textit{totally unimodular integer programs} (TUM IPs), since
optimality of a feasible solution is shown by
 a matching feasible solution of the dual of its LP relaxation
 \cite{LPbook}.  
\begin{corollary} \label{tumcorollary} 
There is a valid $(|A|, 1)$ protocol for any graph problem that can
be formulated  as a linear program or TUM IP in which each entry of
$A$, $\mathbf{b}$, and $\mathbf{c}$ is a linear function of the nodes
and edges of graph. 
%where $k$ is the number of non-zero entries in $A$. 
\end{corollary}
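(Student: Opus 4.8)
The plan is to obtain Corollary~\ref{tumcorollary} almost immediately from Theorem~\ref{LPtheorem}, after (i) reducing a graph stream to an LP stream in a way that preserves $\vcost = O(1)$, and (ii) observing that the LP protocol already suffices for TUM integer programs.

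For step (i), fix a graph problem together with its LP formulation $\min\{\mathbf{c}^T\mathbf{x} \mid A\mathbf{x}\le\mathbf{b}\}$ in which every entry of $A$, $\mathbf{b}$, and $\mathbf{c}$ is a prescribed linear function of the node- and edge-indicator variables of $G$. The key observation is that the only information the verifier of Theorem~\ref{LPtheorem} extracts from the input is the three fingerprints $\finger(S_A)$, $\finger(S_B)$, $\finger(S_C)$, and each of these is itself a \emph{linear} function of the (index, value) updates that define $A$, $\mathbf{b}$, $\mathbf{c}$: mapping the index of each entry through a collision-free hash into $[q]$, the fingerprint is just a sum of terms (entry value)$\,\cdot\alpha^{(\text{hashed index})}$. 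Hence when a stream element $e_k=(u,v)$ arrives, $\verifier$ expands it into the $O(1)$ additive updates it induces on entries of $A$, $\mathbf{b}$, $\mathbf{c}$ (this is exactly the ``view each edge as an update'' remark preceding the corollary) and folds them into the three running fingerprints in $O(1)$ space. For these formulations $|A| = O(m)$, so after one pass $\verifier$ holds precisely the state assumed by Theorem~\ref{LPtheorem}, at a cost of $\vcost = O(1)$.

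For step (ii), from this point $\helper$ and $\verifier$ simply execute the protocol of Theorem~\ref{LPtheorem}: $\helper$ supplies a primal-feasible $\mathbf{x}$ and a dual-feasible $\mathbf{y}$ with $\mathbf{c}^T\mathbf{x}=\mathbf{b}^T\mathbf{y}$, and $\verifier$ accepts this common value; soundness, validity, and the $(|A|,1)$ cost bound are inherited verbatim, including the treatment of the dual LP. To handle a TUM integer program $\min\{\mathbf{c}^T\mathbf{x}\mid A\mathbf{x}\le\mathbf{b},\ \mathbf{x}\in\mathbb{Z}^c\}$, note that since $A$ is totally unimodular and $\mathbf{b}$ is integral (being a linear function of the integral graph data), the LP relaxation has an integral optimal vertex, so the optimum of the relaxation equals that of the IP. Thus $\helper$ provides this integral primal optimum together with an optimal dual solution of the relaxation; the protocol of Theorem~\ref{LPtheorem} never inspects or relies on integrality of $\mathbf{x}$, so it certifies the IP optimum unchanged. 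Instantiating this for shortest $s$-$t$ path, max-flow, min-cut, and minimum-weight bipartite perfect matching yields the claimed $(|A|,1)$ protocols.

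The argument is short because all the real work is in Theorem~\ref{LPtheorem}; the only points that merit care are that each graph element touches only $O(1)$ entries of $A$, $\mathbf{b}$, $\mathbf{c}$ (so fingerprint maintenance stays within $O(1)$ memory despite the reduction) and the appeal to total unimodularity together with integrality of $\mathbf{b}$ to equate the IP and LP optima. Neither is a genuine obstacle; both are standard.
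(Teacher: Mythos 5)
Your proposal is correct and follows essentially the same route as the paper: exploit the linearity of the fingerprints to fold the graph stream's per-edge updates directly into $\finger(S_A)$, $\finger(S_B)$, $\finger(S_C)$ in $O(1)$ space, then run the protocol of Theorem~\ref{LPtheorem} on the aggregated values, with total unimodularity (and integrality of $\mathbf{b}$) guaranteeing that the LP relaxation's optimum, certified by a matching dual solution, equals the IP optimum. Your write-up merely makes explicit details the paper leaves implicit (e.g.\ the integral-vertex argument), so no further comparison is needed.
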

This follows immediately from Theorem \ref{LPtheorem} and the
subsequent discussion: note that the linearity of the fingerprinting 
builds fingerprints of $S_A$, $S_B$ and $S_C$, so $\helper$ presents
only their (aggregated) values, not information from the unaggregated
graph stream. 
\begin{corollary}
\label{cor:lb}
Shortest $s-t$ path, max-flow, min-cut, and
minimum weight bipartite perfect matching (MWBPM) 
all have valid $(m, 1)$ protocols. 
For all four problems, a lower bound of $\hcost \cdot \vcost = \Omega(n^2)$ bits holds 
for dense graphs. 
\end{corollary}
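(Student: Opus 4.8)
The plan is to read the upper bounds off Corollary~\ref{tumcorollary} and to obtain the lower bounds by reduction to problems whose $\hcost\cdot\vcost=\Omega(n^{2})$ hardness is already known (namely bipartite perfect matching and \textsc{Index}).

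\textbf{Upper bounds.} For each of the four problems I would exhibit a linear program, or totally unimodular integer program, whose constraint matrix is the directed node--arc incidence matrix of the graph (or the bipartite vertex--edge incidence matrix) together with simple box constraints; then every entry of $A$, $\mathbf b$, $\mathbf c$ is a linear function of the edges and of the fixed nodes $s,t$, and $A$ has $|A|=O(m)$ nonzeros since each column (edge) contributes only $O(1)$ of them. Concretely: shortest $s$-$t$ path is $\min\{\mathbf w^{T}\mathbf x : N\mathbf x=\mathbf e_{t}-\mathbf e_{s},\ \mathbf x\ge\mathbf 0\}$ with $N$ the incidence matrix; max-flow is $\max\{\mathrm{val}(\mathbf x): N\mathbf x=\mathbf 0 \text{ at internal nodes},\ \mathbf 0\le\mathbf x\le\mathbf u\}$; min $s$-$t$ cut is the LP dual of this max-flow LP, equivalently the TUM IP $\min\{\mathbf u^{T}\mathbf y: y_{uv}\ge d_{u}-d_{v},\ d_{s}-d_{t}\ge1,\ \mathbf y,\mathbf d\ge\mathbf 0\}$, which also has $O(m)$ nonzeros; and MWBPM is $\min\{\mathbf w^{T}\mathbf x:\ \text{each vertex covered exactly once},\ \mathbf x\ge\mathbf 0\}$ over the bipartite incidence matrix. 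Each of these matrices is totally unimodular, so Corollary~\ref{tumcorollary} gives an $(|A|,1)=(m,1)$ protocol; the point is that the corollary only needs $\helper$ to replay aggregated entries of $A,\mathbf b,\mathbf c$, and a graph stream supplies exactly these entries through linear updates.

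\textbf{Lower bounds.} For max-flow, min-cut, and MWBPM I would transfer the bipartite-perfect-matching bound of \cite[Theorem~11]{icalp09}. For max-flow, use the textbook reduction: from a bipartite graph on $L\cup R$, $|L|=|R|=n$, build the network with unit-capacity arcs $s\to\ell$ ($\ell\in L$), $\ell\to r$ for each input edge $(\ell,r)$, and $r\to t$ ($r\in R$); the max-flow value equals the maximum matching size, so an $(h,v)$ protocol for max-flow yields one for bipartite perfect matching, forcing $\hcost\cdot\vcost=\Omega(n^{2})$ on this dense ($\Theta(n)$-node, $\Theta(n^{2})$-edge) family, and min $s$-$t$ cut inherits the bound by max-flow/min-cut duality. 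For MWBPM, work on the complete bipartite graph and assign weight $0$ to edges of the input graph $G$ and weight $1$ to non-edges: the minimum-weight perfect matching has value $0$ iff $G$ has a perfect matching, and the $\Theta(n^{2})$ weight bits can be placed in the stream respecting whatever input partition the \cite{icalp09} argument uses, so the $\Omega(n^{2})$ bound carries over. For shortest $s$-$t$ path I would instead reduce from \textsc{Index} on $N=n^{2}$ bits (whose annotated-streaming hardness $\hcost\cdot\vcost=\Omega(N)$ follows from the lower-bound machinery of \cite{icalp09}): interpret the $N$ bits as a directed bipartite graph $L\to R$, $|L|=|R|=n$; after this graph, the player holding the query index $(i,j)$ appends arcs $s\to\ell_{i}$ and $r_{j}\to t$; then every $s$-$t$ walk has the form $s\to\ell_{i}\to r_{j}\to t$, so $\mathrm{dist}(s,t)=3$ when the queried bit is $1$ and $s$ cannot reach $t$ otherwise, and the distance reveals the bit. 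The standard simulation of an $(h,v)$ annotation protocol by a Merlin--Arthur communication protocol using $O(h)$ help bits and $O(v)$ Arthur communication then gives $\hcost\cdot\vcost=\Omega(n^{2})$.

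\textbf{Anticipated difficulty.} The upper bounds are essentially immediate once the right sparse TUM formulation is written down; the only points needing care are (i) verifying that these incidence-matrix LPs really have $|A|=O(m)$ rather than an extra $\Theta(n)$ or $\Theta(m)$ worth of spurious entries, and (ii) in the lower-bound reductions, ensuring the hard bits all lie in the portion of the stream controlled by the first party (so the Arthur communication is genuinely bounded by $v$) and that the ``$s$ cannot reach $t$'' outcome for shortest path is distinguishable under the protocol's output convention --- it is, e.g. by adding a single long detour arc so that the two cases yield distinct finite distances. Neither point is a real obstacle: essentially all of the substance is already contained in Theorem~\ref{LPtheorem}/Corollary~\ref{tumcorollary} and in \cite[Theorem~11]{icalp09}, both of which may be invoked directly.
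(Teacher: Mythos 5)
Your proposal is correct and follows essentially the same route as the paper: upper bounds by plugging sparse TUM formulations (incidence-matrix LPs with $O(m+n)$ nonzeros) into Corollary~\ref{tumcorollary}, lower bounds for max-flow, min-cut, and MWBPM by reduction from the bipartite perfect matching bound of \cite[Theorem 11]{icalp09}, and for shortest $s$-$t$ path by a direct \textsc{index} reduction in which Alice encodes the $n^2$ bits as a bipartite edge set and Bob attaches $s$ and $t$ to the queried pair. The only cosmetic difference is that the paper's shortest-path gadget is undirected, giving distance $3$ versus ``$4$ or more'' and thereby sidestepping the reachability issue you patch with a detour arc.
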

\begin{proof} 
The upper bound follows from the previous corollary
because all the problems listed possess formulations as TUM IPs 
and moreover the constraint
matrix in each case has $O(m+n)$ non-zero entries.  
For example, for max-flow, $\mathbf{x}$ gives the flow on each edge, 
and the weight of each edge in the stream contributes (linearly) to constraints
on the capacity of that edge, and the flow through incident nodes. 

The lower bound for MWBPM, max-flow, and min-cut holds 
from \cite[Theorem 11]{icalp09} which argues
$\hcost \cdot \vcost = \Omega(n^2)$ bits for bipartite perfect matching, 
and straightforward reductions of bipartite perfect matching 
to all three problems, see e.g. \cite[Theorem 7.37]{KandT}.  
The lower bound for shortest $s-t$ path follows from a
 straightforward reduction from {\sc index}, for which
a lower bound linear in $\hcost \cdot \vcost$ was proven in \cite[Theorem 3.1]{icalp09}.
Given an instance $(x , k )$ of {\sc index} where $x \in \{0, 1\}^{n^2}$, 
$k \in [n^2]$, we construct graph $G$, with $V_G = [n+2]$, and 
$E_G = E_A \cup E_B$. 
Alice creates 
$E_A= \{(i, j): x_{f(i, j) =1}\}$ from $x$ alone, 
where $f$ is a 1-1 correspondence $[ n ] \times [ n ] \rightarrow [n^2]$.
Bob creates $E_B = \{(n+1, i), (j, n+2)\}$ using $f(i, j) = k$.
The shortest path between 
nodes $n+1$ and $n+2$ is 3 if $x_k = 1$ and is 4 or more
otherwise. 
This also implies that any approximation within $\sqrt{4/3}$
%We note
%that this reduction also shows that any protocol approximating shortest $s-t$ path within a $\frac{4}{3}$-factor
requires 
$\hcost \cdot \vcost = \Omega(n^2)$ (better inapproximability
constants may be possible). 
%to optimize the  inapproximability constant. 
%No effort has been made to optimize the  inapproximability constant. 
%Suppose we are given an instance of INDEX, where Alice has a string $x \in \{0, 1\}^n$
%and Bob has a $j \in [n]$. Let $G$ be a bipartite graph 
%\qed
\end{proof}
\noindent
Conceptually, the above protocols for solving the LP streaming problem
are straightforward: $\helper$ provides a primal solution, potentially
repeating it once for each row of $A$ to prove feasibility, and
repeats the protocol for the dual. 
There are efficient protocols for
the problems listed in the corollary since the
constraint matrices of their IP formulations are sparse. For dense
constraint matrices, however, the bottleneck 
%in the protocol 
is proving feasibility. 
We observe that computing $A\mathbf{x}$ reduces to 
computing $b$ inner-product computations of vectors of dimension $c$. 
There are $(c^\alpha,c^{1-\alpha})$ protocols to verify such
inner-products \cite{icalp09}. 
But we can further improve on this
% by using the fact that 
since one of
the vectors is held constant in each of the tests.
This reduces the
space needed by $\verifier$ to run these checks in parallel; moreover, we 
prove a lower bound of $\hcost \cdot \vcost = \Omega(\min(c, b)^2)$ bits, and 
so obtain an \textit{optimal} tradeoff for square
matrices, up to logarithmic factors.
%We omit the proof for space reasons. 
% The proof is in Appendix \ref{app:matrixvector}.
%
\begin{theorem}
\label{thm:matrixvector}
Given a $b \times c$ matrix $A$ and a $c$ dimensional vector $\mathbf{x}$,
the product $A\mathbf{x}$ can be verified with a
valid $(bc^\alpha, c^{1-\alpha})$ protocol. Moreover, any such protocol
requires $\hcost \cdot \vcost = \Omega(\min(c, b)^2)$ bits for dense matrices.
\end{theorem}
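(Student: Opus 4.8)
The plan is to batch $b$ copies of the single inner-product protocol of Chakrabarti \emph{et al.}~\cite{icalp09}, with two twists that keep the verifier's memory at $O(c^{1-\alpha})$ rather than the $O(bc^{1-\alpha})$ a naive parallel composition would need, and without inflating the annotation beyond $O(bc^\alpha)$. For the lower bound I would reduce from \textsc{index}, for which \cite[Theorem 3.1]{icalp09} already gives $\hcost\cdot\vcost=\Omega(N)$ bits on $N$-bit instances.

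\textbf{Upper bound.} I would reshape the column index set $[c]$ as a grid $[c_1]\times[c_2]$ with $c_1=\lceil c^\alpha\rceil$, $c_2=\lceil c^{1-\alpha}\rceil$ (padding with zeros), and work over $\mathbb{F}_p$ for a prime $p$ polynomially large in $b$, $c$, and the entry magnitudes. For each row $i$ let $\tilde A_i(X,Y)$ be the bivariate low-degree extension of that row (degree $<c_1$ in $X$, $<c_2$ in $Y$) and $\tilde x(X,Y)$ that of $\mathbf x$, and set $g_i(X):=\sum_{y\in[c_2]}\tilde A_i(X,y)\tilde x(X,y)$, a polynomial of degree $<2c_1$ with $\sum_{a\in[c_1]}g_i(a)=(A\mathbf x)_i$; crucially $g_i$ does not depend on $\verifier$'s randomness, so $\helper$ may transmit it. At the outset $\verifier$ draws $r,\rho\in\mathbb{F}_p$, and while reading the input stream it maintains just two length-$c_2$ vectors, $\big(\tilde x(r,y)\big)_{y}$ and $\big(\sum_i\rho^i\tilde A_i(r,y)\big)_{y}$, each updated in $O(1)$ time per stream token (an update to $\mathbf x_{(a,y)}$ adds $\chi_a(r)\cdot(\text{value})$ to coordinate $y$ of the first, where $\chi_a$ is the $c_1$-point Lagrange basis; an update to $A_{i,(a,y)}$ adds $\rho^i\chi_a(r)\cdot(\text{value})$ to coordinate $y$ of the second). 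Then $\helper$ sends the claimed product $\mathbf z$ followed by $g_1,\dots,g_b$ as $O(c_1)$ evaluations at a fixed point set, which $\verifier$ can stream through in $O(1)$ extra space, via standard streaming interpolation, to recover both $\sum_{a\in[c_1]}g_i(a)$ and $g_i(r)$. $\verifier$ accepts iff $\sum_{a\in[c_1]}g_i(a)=\mathbf z_i$ for every $i$ and $\sum_i\rho^i g_i(r)=\sum_{y\in[c_2]}\big(\sum_i\rho^i\tilde A_i(r,y)\big)\tilde x(r,y)$. Soundness follows from two nested Schwartz--Zippel steps: passing the last check forces, whp over $r$, the identity $\sum_i\rho^i g_i(X)\equiv\sum_i\rho^i\sum_y\tilde A_i(X,y)\tilde x(X,y)$, hence whp over $\rho$ the identity $g_i(X)\equiv\sum_y\tilde A_i(X,y)\tilde x(X,y)$ for each $i$, and then the per-row sum checks force $\mathbf z_i=(A\mathbf x)_i$. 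Taking $p$ polynomially large keeps the total failure probability below $1/3$; here $\hcost=O(b+bc_1)=O(bc^\alpha)$ and $\vcost=O(c_2)=O(c^{1-\alpha})$.

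\textbf{Lower bound.} I would take an \textsc{index} instance on $N=\min(c,b)^2$ bits: a string $s\in\{0,1\}^N$ and an index $k$, identified with a pair $(i,j)\in[d]\times[d]$ for $d=\min(c,b)$. Build the matrix-vector instance by streaming $A$ whose leading $d\times d$ block equals $s$ (computable from $s$ alone) and whose remaining entries are all $1$ (so $A$ is dense), together with $\mathbf x=\mathbf e_j$ (computable from $k$ alone); then $(A\mathbf x)_i=A_{ij}=s_k$. Any valid $(h,v)$ protocol for verifying $A\mathbf x$ then solves \textsc{index}: by validity some $\helper$ makes $\verifier$ accept with $\mathbf z=A\mathbf x$, and by soundness any accepted $\mathbf z$ has $\mathbf z_i=s_k$, so the answer is read off coordinate $i$ of the accepted annotation. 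Since the reduction merely relabels stream tokens, $\hcost\cdot\vcost=\Omega(\min(c,b)^2)$ bits, which together with the upper bound's product $\Theta(bc^\alpha\cdot c^{1-\alpha})=\Theta(bc)$ is tight up to logarithmic factors whenever $b=\Theta(c)$.

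\textbf{Main obstacle.} The delicate point is reconciling memory and annotation: $b$ inner-product checks run in true parallel cost $\Theta(bc^{1-\alpha})$ memory, while letting $\helper$ replay $A$ so that $\verifier$ can treat rows one at a time would blow $\hcost$ up to $\Theta(bc)$. The resolution is to fold the row index into a single random-linear-combination sketch $\sum_i\rho^i\tilde A_i(r,\cdot)$, paid for by $\helper$ sending one polynomial $g_i$ per row, exploiting that $g_i$ is independent of $\verifier$'s secret coins so it can legitimately arrive after the stream. Once that structure is in place, composing the error probabilities of the two Schwartz--Zippel steps and the field arithmetic, and checking that the streaming interpolation of each $g_i$ truly fits in $O(1)$ words, is routine.
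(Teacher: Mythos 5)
Your proposal is correct and follows essentially the same route as the paper: the upper bound reshapes $[c]$ into a $c^\alpha\times c^{1-\alpha}$ grid and runs the \cite{icalp09} inner-product protocol for all $b$ rows at once, collapsing the per-row sketches $\tilde A_i(r,\cdot)$ into a single random linear combination over the row index (your $\sum_i\rho^i\tilde A_i(r,y)$ is exactly the paper's fingerprint $\finger(A,y)=\sum_i f_i(r,y)\alpha^i$), with the helper sending one low-degree polynomial per row and soundness from Schwartz--Zippel over the verifier's private $(r,\rho)$; the lower bound is the same \textsc{index} reduction with $A$ encoding the string and $\mathbf{x}$ a standard basis vector. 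The only cosmetic difference is that you have the helper announce a claimed $\mathbf z$ and check it per row (which should be interleaved with, or fingerprinted against, the $g_i$'s to stay within $O(c^{1-\alpha})$ space), whereas the paper simply has the verifier output $\sum_x s_i(x)$ directly.
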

\begin{proof}
We begin with the upper bound. The protocol for verifying inner-products which follows from
\cite{icalp09} treats a $c$ dimensional vector as an 
$h \times v$ array $F$, where $hv \geq c$. 
This then defines degree $c$ polynomials $f$ 
over a suitably large field, so that for each polynomial
$f(x,y) = F_{x,y}$. 
For an inner-product between two vectors, we wish to compute 
$\sum_{x \in [h], y \in [v]} F_{x,y} G_{x,y} = \sum_{x \in [h], y \in
  [v]} f(x,y) g(x,y)$ for the corresponding arrays $F, G$ and
polynomials $f, g$. 
These polynomials can then be evaluated at locations outside 
$[h] \times [v]$, so in the protocol 
$\verifier$ picks a random position $r$, and evaluates 
$f(r,y)$ and $g(r,y)$ for $1 \leq y \leq v$. 
$\helper$ then presents a degree $h$ polynomial $s(x)$ which is claimed to be 
$\sum_{y =1}^v f(x,y) g(x,y)$.
$\verifier$ checks that $s(r) = \sum_{y=1}^v f(r,y) g(r,y)$, and if so
accepts $\sum_{x=1}^{h} s(x)$ as the correct answer. 

In \cite{icalp09} it is shown how $\verifier$ can compute $f(r,y)$
efficiently as $F$ is defined incrementally in the stream: each
addition of $w$  to a particular index is mapped to 
$(x,y) \in [h] \times [v]$, which causes
$f(r,y) \gets f(r,y) + w p(r,x,y)$, where $p(r,x,y)$ depends
only on $x$, $y$, and $r$. 
Equivalently, the final value of $f(r,y)$ over updates in the stream 
where the $j$th update is $t_j = (w_j, x_j, y_j)$ is
$f(r,y) = \sum_{t_j: y_j = y} w_j p(r, x_j, y)$. 

To run this protocol over multiple vectors in parallel naively would
require keeping the $f(r,y)$ values implied by each different vector
separately, which would be costly.  
Our observation is that rather than keep these values explicitly, it
is sufficient to keep only a fingerprint of these values, and use the
linearity of fingerprint functions to finally test whether the
polynomials provided by $\helper$ for each vector together agree with
the stored values. 

In our setting, the $b \times c$ matrix $A$
implies $b$ polynomials of degree $c$. 
We evaluate each polynomial at $(r,y)$ for $1\leq y \leq v$ for the
same value of $r$: since each test is fooled by $\helper$ with small
probability, the chance that none of them is fooled can be kept high
by choosing the field to evaluate the polynomials over to have size
polynomial in $b+c$.
Thus, conceptually, the parallel invocation of $b$ instances of this
protocol require us to store 
$f_{i}(r,y)$ for $1 \leq y \leq v$ and $1 \leq i \leq b$ (for the $b$
rows of $A$), as well as $f_x(r,y)$ for $1 \leq y \leq v$ (where $f_x$ is the
polynomial derived from $\mathbf{x}$). 
Rather than store this set of $bv$ values explicitly, 
$\verifier$ instead stores only $v$ fingerprints, one for each
value of $y$, where each
fingerprint captures the set of $b$ values of $f_i(r,y)$.

From the definition of our fingerprints, this means over stream updates 
$t_j = (w_j, i_j, x_j, y_j)$ 
of weight $w_j$ to row $i_j$ and column indexed by $x_j$ and $y_j$
we compute a fingerprint
\[
\finger(A,y) = 
  \sum_{i=1}^{b} f_i(r,y) \alpha^i = 
 \sum_{i=1}^{b} \sum_{t_j: y_j=y, i_j=i} w_j p(r_j,x_j, y) \alpha^i 
\]
for each $y$, $1 \leq y \leq v$. Observe that for each $y$ this can be computed incrementally in the stream 
by storing only $r$ and the current value of $\finger(A,y)$. 

\eat{
In the stream, each update to $A_{ij}$ is interpreted by the protocol
of \cite{icalp09} as an update to a certain $f_i(r,y)$: this update is
to add a value which is a function of the update alone onto the
current value.  
So when we retain only a fingerprint, it is still possible to
propagate this update and track the fingerprint of the $f_i(r,y)$
values without needed explicit access to the prior values. 
More abstractly, the $f_i(r,y)$s can be expressed as a linear
transform of the input stream, and by the linearity of the
fingerprinting function, it is possible to compose these two linear
operations as a single linear operator. 
Hence this can be computed on the stream, only storing the
intermediate values of the fingerprint.  
}

To verify the correctness, $\verifier$ 
receives the $b$ polynomials $s_i$, and builds a fingerprint of 
the multiset of $S = \{i : s_i(r)\}$ incrementally. 
$\verifier$ then tests whether

\[
\sum_{y =1}^{v} \finger(A,y) f_x(r,y) = \finger(S) \]

To see the correctness of this, we expand the lhs, as
\begin{align*}
\sum_{y=1}^{v} \finger(A,y) f_x(r,y) = & 
 \sum_{y=1}^{v} \big(\sum_{i=1}^{b} f_i(r,y) \alpha^i\big) f_x(r,y)
\\
&=
\sum_{y=1}^{v} \big( \sum_{i=1}^{b} f_x(r,y) f_i(r,y) \alpha^i \big)
\\
&=
\sum_{i=1}^{b} \big( \sum_{y=1}^{v} f_x(r,y) f_i(r,y) \big) \alpha^i 
\end{align*}

Likewise, if all $s_i$'s are as claimed, then 
\begin{align*}
\finger(S) &
= \sum_{i=1}^{b} s_i(r) \alpha^i 
= \sum_{i=1}^{b} (\sum_{y=1}^{v} f_x(r,y) f_i(r,y) \big) \alpha^i
\end{align*}

Thus, if the $s_i$'s are as claimed, then these two fingerprints
should match. 
Moreover, by the Schwartz-Zippel lemma, and the fact that $\alpha$ and
$r$ are picked randomly by $\verifier$ and not known to $\helper$,
the fingerprints will not match with high probability if the $s_i$'s are \textit{not} as claimed, 
when the polynomials are evaluated
over a field of size polynomial in $(b+c)$. 

\eat{
the protocol is effectively computing 
a vector of $s_i(r)$ values (for the polynomial claimed to be
$\sum_{y=1}^v f_i(r,y) f_x(r,y)$) and ensuring that it is equal to the
vector of computed $\sum_{y=1}^v f_i(r,y) f_x(r,y)$ values. 
Each fingerprint is of a vector of $f_i(r,y)$ values. 
By linearity, we have that 
$\finger(f_i(r,y)) f_x(r,y) = \finger(f_i(r,y) f_x(r,y)$
and
$\sum_{y=1}^{v} \finger(f_i(r,y) f_x(r,y)) = \finger(\sum_{y=1}^{v}
f_i(r,y) f_x(r,y))$. 
}

To analyze the $\vcost$, we observe that
$\verifier$ can compute all fingerprints in
$O(v)$ space. 
As $\helper$ provides each polynomial $s_i(x)$ in turn, 
$\verifier$ can incrementally compute $\finger(S)$ and check that
this matches $\sum_{y=1}^{v} \finger(A,y) f_x(r,y)$. 
At the same time, $\verifier$ also computes $\sum_{i=1}^{b}
\sum_{x=1}^{h} s_i(x)$, as the value of $A\mathbf{x}$.
% which should be less than
%$b_j$ (also provided by $\helper$). 
Note that if each $s_i$ is sent one after another, $\verifier$ can
forget each previous $s_i$ after the required fingerprints and
evaluations have been made; and if $h$ is larger than $v$, does not
even need to keep $s_i$ in memory, but can instead evaluate it term by
term in parallel for each value of $x$.  
Thus the total space needed by $\verifier$ is dominated by the $v$
fingerprints and check values. 

Setting $h=c^{\alpha}$ and $v=c^{1-\alpha}$, the total size of the
information sent by $\helper$ is dominated by the $b$ polynomials of
degree $h = c^{\alpha}$.

To prove the lower bound, we give a simple reduction of {\sc index} to
matrix-vector multiplication. Suppose we have an instance $(x , k )$
of {\sc index} where $x \in \{0, 1\}^{n^2}$, 
$k \in [n^2]$. Alice constructs an $n \times n$ matrix $A$ from $x$ alone, in which $A_{i, j}=1$ if 
$x_{f(i, j) =1}$, where $f$ is a 1-1 correspondence $[ n ] \times [ n ] \rightarrow [n^2]$, 
and $A_{i, j}=0$ otherwise. Assume $f(i, j) = k$. Bob then constructs a vector $\mathbf{x} \in \mathbb{R}^n$ such that 
$\mathbf{x}_{i} = 1$ and  all other entries of $\mathbf{x}$ are 0. Then the $j$'th entry of $A\mathbf{x}$ is 1
if and only if  $x_{f(i, j) =1}$, and therefore the value of $x_{f(i, j)}$ can be extracted from the
vector $A\mathbf{x}$. Therefore, if we had an $(h, v)$ protocol for verifying matrix-vector multiplication
given an $n \times n$ matrix $A$
(even for a stream in which all entries of $A$ come before all entries of $\mathbf{x}$), we would obtain 
a $(\sqrt{h}, \sqrt{v})$ protocol for {\sc index}. The lower bound for matrix-vector multiplication thus holds
by a lower bound for {\sc index} given in \cite[Theorem 3.1]{icalp09}.
%\qed
\end{proof}

\begin{corollary}
\mbox{\!\!For $c\!\geq\!\!b$
there is a valid $(c^{1+\alpha}\!\!, c^{1-\alpha})$ protocol for the LP
streaming problem.}
\end{corollary}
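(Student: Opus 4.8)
The plan is to combine the duality-based scheme of Theorem~\ref{LPtheorem} with the matrix-vector multiplication protocol of Theorem~\ref{thm:matrixvector}, so that the one expensive step — proving feasibility of the claimed primal and dual solutions — is carried out via the space-efficient matrix-vector check instead of by replaying a dense constraint matrix row by row.

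As in the proof of Theorem~\ref{LPtheorem}, $\helper$ sends a primal solution $\mathbf{x}\in\mathbb{R}^c$, a dual solution $\mathbf{y}\in\mathbb{R}^b$, and their common objective value. $\verifier$ must then check (i) primal feasibility, i.e.\ $A\mathbf{x}\le\mathbf{b}$ componentwise; (ii) dual feasibility, i.e.\ the analogous (in)equalities relating $A^T\mathbf{y}$ to $\mathbf{c}$ together with the (trivial) sign constraints on $\mathbf{y}$; and (iii) that $\mathbf{c}^T\mathbf{x}=\mathbf{b}^T\mathbf{y}$. Step (i) is exactly a verification of the product $A\mathbf{x}$ for the $b\times c$ matrix $A$: $\verifier$ runs the protocol of Theorem~\ref{thm:matrixvector} with $h=c^{\alpha}$, $v=c^{1-\alpha}$, recovering each coordinate $(A\mathbf{x})_i=\sum_x s_i(x)$ and comparing it against $\mathbf{b}_i$, which $\helper$ replays alongside the polynomials $s_i$ and which is checked against a fingerprint $\finger(S_B)$ of the stream. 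Step (ii) is a verification of the product $A^T\mathbf{y}$ for the $c\times b$ matrix $A^T$; applying Theorem~\ref{thm:matrixvector} with the roles of the two dimensions exchanged and the same parameter $\alpha$ yields a $(c\,b^{\alpha},\,b^{1-\alpha})$ protocol, after which $\verifier$ compares the recovered coordinates against $\mathbf{c}$. Step (iii) can be handled by two invocations of the $(c^{\alpha},c^{1-\alpha})$ inner-product protocol of \cite{icalp09}, one for $\mathbf{c}^T\mathbf{x}$ and one for $\mathbf{b}^T\mathbf{y}$; alternatively, and more cleanly, one appends $\mathbf{c}^T$ as an extra $(b{+}1)$st row of $A$ and $\mathbf{b}^T$ as an extra $(c{+}1)$st row of $A^T$, so that $\mathbf{c}^T\mathbf{x}$ and $\mathbf{b}^T\mathbf{y}$ drop out of the two matrix-vector checks at no extra asymptotic cost.

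Consistency is enforced by fingerprinting, exactly as in Theorems~\ref{LPtheorem} and~\ref{thm:matrixvector}: during its single pass over the (possibly interleaved) input stream $\verifier$ maintains $\finger(S_A)$, $\finger(S_B)$, $\finger(S_C)$ and, in parallel, the two vectors of fingerprints $\finger(A,y)$ for $1\le y\le c^{1-\alpha}$ and $\finger(A^T,y')$ for $1\le y'\le b^{1-\alpha}$ demanded by the two invocations of Theorem~\ref{thm:matrixvector}, routing each entry $A_{ij}$ into both reshapings. The $\helper$-supplied vectors $\mathbf{x}$ and $\mathbf{y}$ are each fingerprinted on arrival (as in Lemma~\ref{labellemma}), which pins down a single value to be used throughout. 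Since the annotation is itself a stream, the primal check, the dual check, and the objective check are performed sequentially and may reuse working memory, so $\vcost$ is dominated by the fingerprints carried through the input pass, namely $O(c^{1-\alpha}+b^{1-\alpha})=O(c^{1-\alpha})$ words. For $\hcost$: $\mathbf{x}$ and $\mathbf{y}$ contribute $O(c)$; the primal check contributes $O(b\,c^{\alpha})\le O(c^{1+\alpha})$ since $b\le c$; the dual check contributes $O(c\,b^{\alpha})\le O(c^{1+\alpha})$, again since $b\le c$; and the replays of $\mathbf{b}$ and $\mathbf{c}$ add only $O(c)$. Hence the protocol is a valid $(c^{1+\alpha},c^{1-\alpha})$ protocol, and soundness error stays polynomially small by taking the underlying field to have size polynomial in $b+c$.

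I do not expect a genuine obstacle here, since everything reduces to the two theorems already in hand; the one point requiring care is purely bookkeeping — verifying that a single pass over the interleaved stream can simultaneously build both reshaped fingerprint vectors (one from $A$, one from $A^T$) and that the $\helper$-provided $\mathbf{x},\mathbf{y}$ are fixed by their fingerprints so they cannot be silently changed between the primal feasibility check, the dual feasibility check, and the objective-equality check. This is handled by the same template used for Theorems~\ref{LPtheorem} and~\ref{thm:matrixvector}.
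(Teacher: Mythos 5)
Your proposal is correct and follows essentially the same route as the paper: invoke the matrix-vector protocol of Theorem \ref{thm:matrixvector} inside the duality framework of Theorem \ref{LPtheorem} to check $A\mathbf{x}\le\mathbf{b}$ and $A^T\mathbf{y}\ge\mathbf{c}$, yielding cost $(bc^{\alpha}+cb^{\alpha},\,c^{1-\alpha}+b^{1-\alpha})$, which is dominated by $(c^{1+\alpha},c^{1-\alpha})$ when $c\ge b$. Your additional bookkeeping on fingerprinting the two reshapings and pinning down $\mathbf{x},\mathbf{y}$ is a finer-grained elaboration of the same argument, not a different approach.
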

\begin{proof} This follows by using the protocol of Theorem \ref{thm:matrixvector}
to verify $A\mathbf{x} \leq \mathbf{b}$ and $A^T \mathbf{y} \geq
\mathbf{c}$ within the protocol of Theorem \ref{LPtheorem}. 
The cost is $(bc^{\alpha} + cb^{\alpha}, c^{1-\alpha} +
b^{1-\alpha})$, so if  $c \geq b$, this is
dominated by $(c^{1+\alpha}, c^{1-\alpha})$ (symmetrically, if
$b > c$, the cost is $(b^{1+\alpha}, b^{1-\alpha})$). %\qed
\end{proof}

Our protocol for linear programming relied on only two properties:
strong duality, and the ability to compute the value of a solution
$\mathbf{x}$ and check feasibility via matrix-vector multiplication.
Such properties also hold for more general convex optimization
problems, such as quadratic programming and a large class
of second-order cone programs.
Thus, similar results apply for these mathematical programs,
motivated by applications in which weak peripheral devices or sensors
perform error correction on signals. 
We defer full details from this presentation. 

% we obtain similarly optimal tradeoffs
%for these large classes of mathematical programs.  
%Such programs
%might, for example, be useful for applications in which weak
%peripheral devices or sensors must perform error correction of
%received communication. See e.g. \cite{coding}). 
%Similar ideas also
%provide a non-trivial protocol for semidefinite programming, and
%consequently non-trivial approximation protocols for several NP-hard
%optimization problems. 
%Details are deferred until the full version of this paper.

Theorem \ref{thm:matrixvector} also implies the existence of 
protocols for graph problems where {\em both} $\hcost$ and $\vcost$ are sublinear in the
size of the input (for dense graphs). 
These include: 

\begin{itemize}

\item
%\noindent $\quad - \quad$
An $(n^{1+\alpha}, n^{1-\alpha})$ protocol 
for verifying that $\lambda$ is an eigenvalue of the
adjacency matrix $A$ or the Laplacian $L$ of $G$: $\helper$ provides
the corresponding  eigenvector $x$, and $\verifier$ can use the
protocol of Theorem \ref{thm:matrixvector} 
to verify that $Ax=\lambda x$ or $Lx=\lambda x$. 

\item
%\noindent $\quad - \quad$
An $(n^{1+\alpha}, n^{1-\alpha})$ protocol for the problem of
determining the effective resistance between designated nodes $s$ and
$t$ in $G$ where the edge weights are resistances.  
The problem reduces to solving an $n \times n$ system of 
linear equations
% to which we may apply Theorem \ref{thm:matrixvector} 
\cite{bollobastext}. 
\end{itemize}
%details are deferred to the full version of this paper.
%
\section{Simulating Non-Streaming Algorithms}
\label{sec:transcript}
Next, we give protocols by appealing to known
non-streaming algorithms for graph problems. 
At a high level, we can imagine the helper running an algorithm on the
graph, and presenting a ``transcript'' of operations carried out by
the algorithm as the proof to $\verifier$ that the final result is
correct. 
Equivalently, we can imagine that $\verifier$ runs the
algorithm, but since the data structures are large, 
they are stored by  $\helper$, who provides the contents of memory
needed for each step. 
There may be many choices of the algorithm to simulate and the
implementation details of the algorithm:  our aim is to
choose ones that result in smaller annotations.

To make this concrete, consider the case of requiring the graph to be
presented in a particular order, such as depth first order. 
Starting from a given node, the exploration retrieves nodes in 
order, based on the pattern of edges. 
Assuming an adjacency list representation, a natural implementation of
the search in the traditional model of computation 
maintains a stack of edges (representing the current path
being explored). 
Edges incident on the current node being explored are pushed, and pops
occur whenever all nodes connected to the current node have already
been visited. 
$\helper$ can allow $\verifier$ to recreate this exploration by providing
at each step the next node to push, or the new head of the stack when
a pop occurs, and so on. 
To ensure the correctness of the protocol, additional checking
information can be provided, such as pointers to 
the location in the stack when a node is visited that has already been
encountered. 

We provide protocols for breadth first search and depth first search
in Appendix \ref{app:dfs}, based on this idea of ``augmenting a
transcript'' of a traditional algorithm. 
However, while the resulting protocols are lightweight, it rapidly
becomes tedious to provide appropriate protocols for other computations 
based on this idea. 
Instead, we introduce a more general approach which argues that any 
(deterministic) algorithm to solve a given problem can be converted
into a protocol in our model. 
The running time of the algorithm in the RAM model becomes the size of
the proof in our setting. 

Our main technical tool
is the off-line memory checker of Blum et al. \cite{memcheck}, 
which we use to efficiently verify a sequence of accesses to a large memory. 
Consider a {\em memory transcript} of a sequence of read and write
operations to this memory (initialized to all zeros).
Such a transcript is {\em valid} if each read of address $i$ returns
the last value written to that address. 
The protocol of Blum et al. requires each read to be accompanied by
the timestamp of the last write to that address; and to treat each
operation (read or write) as 
a read of the old value followed by the write of a new value. 
Then to ensure validity of the transcript, it suffices to check that a fingerprint of all write operations
(augmented with timestamps) matches a fingerprint of all read
operations (using the provided timestamps), along with some simple
local checks on timestamps. 
Consequently, any valid (timestamp-augmented) transcript is accepted
by $\verifier$,
while any invalid transcript is rejected by $\verifier$ with high probability.

%it is possible to 
%define an \textit{augmented memory transcript} of length $O(t)$, consisting of an underlying memory transcript with
%some additional annotation, satisfying the following property. There is a constant-space verifier such that:
%\begin{enumerate}
%\item For any valid memory transcript $T$, there is an augmented
%  memory transcript $T'$ extending $T$ such that $V$ 
%accepts with probability one when given $T'$ as input. We refer to
%$T'$ as a valid augmented transcript for $T$. 
%\item For any \textit{invalid} memory transcript $T$, any augmented
%  memory transcript extending $T$ will be rejected 
%by $V$ with high probability.
%\end{enumerate}

We use this memory checker to obtain the following general simulation result.

\begin{theorem} \label{sim} Suppose $P$ is a graph problem 
possessing a non-randomized algorithm $\algorithm$ 
in the random-access memory model
that, when given $G=(V,E)$ in adjacency list or adjacency matrix form, outputs 
$P(G)$ in time $t(m,n)$, where $m=|E|$ and $n=|V|$. 
Then there is an $(m + t(m,n), 1)$ protocol for $P$.
\end{theorem}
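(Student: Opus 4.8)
The plan is to have $\helper$ simulate the execution of $\algorithm$ on $G$ step by step, using the off-line memory checker of Blum \emph{et al.} to certify that the simulated memory accesses are faithful. First I would have $\verifier$ read the graph stream $\mathcal{A}$ and, using the Consistent Labels machinery (Lemma \ref{labellemma}) or a direct fingerprint, build a fingerprint of the adjacency-list (or adjacency-matrix) representation of $G$; this fingerprint will serve as the ``initial contents'' of the RAM that $\algorithm$ operates on. The key conceptual point is that a deterministic RAM computation of $\algorithm$ is nothing more than a sequence of read and write operations to memory, together with $O(1)$ work per step between accesses; since $\algorithm$ is non-randomized and deterministic, $\verifier$ can locally recompute, from the values returned by each read, exactly which memory location $\algorithm$ reads or writes next and what value it writes. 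Thus $\helper$'s annotation consists simply of a timestamp-augmented transcript: for each of the $t(m,n)$ steps, the value read (or written) at the accessed address, plus the timestamp of the last write to that address, exactly in the format the memory checker requires.

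The main steps, in order, are: (1) $\verifier$ fingerprints the input representation of $G$ from the stream in $O(1)$ space, treating it as a sequence of initial writes to memory (addresses holding adjacency information, all other addresses implicitly zero); this costs $\hcost = O(m)$ to have $\helper$ replay the representation in address order so $\verifier$ can check it matches $\finger(E)$, and $\vcost = O(1)$. (2) $\helper$ streams the memory transcript of $\algorithm$'s execution, and $\verifier$ runs the Blum \emph{et al.} checker: it maintains one fingerprint of all (address, value, timestamp) write tuples and one of all read tuples, plus a running clock and the $O(1)$ internal state of $\algorithm$ between accesses, performing the simple local timestamp monotonicity checks. This costs $\hcost = O(t(m,n))$ (a constant number of words per step) and $\vcost = O(1)$. (3) At the end, $\verifier$ checks the two fingerprints agree — guaranteeing w.h.p. that every read returned the last value written — and checks that the final value $\algorithm$ ``outputs'' is consistent with the transcript; it accepts iff all checks pass and outputs the claimed $P(G)$. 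Soundness follows because any deviation by $\helper$ either (a) misrepresents the initial memory, caught by the input fingerprint, or (b) produces an invalid transcript, caught by the memory checker w.h.p., or (c) is a faithful transcript of $\algorithm$ on the true $G$, in which case determinism of $\algorithm$ forces the output to be the correct $P(G)$. Completeness is immediate: the honest $\helper$ simply runs $\algorithm$ and records its memory trace.

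I expect the main obstacle to be bookkeeping rather than conceptual: one must argue carefully that $\verifier$ really can, in $O(1)$ words, carry out $\algorithm$'s $O(1)$-per-step transition function given only the read values supplied by $\helper$ — i.e., that the RAM model's word size $\Theta(\log m)$ matches our machine-word convention, and that $\algorithm$ never needs more than a constant number of registers of working state between memory accesses (any larger working state is itself pushed into the checked memory). A second point needing care is that $\algorithm$ may expect random access to the adjacency structure, so $\verifier$ cannot store it; the fix is exactly that the adjacency representation lives in the memory-checked region, initialized via the fingerprint from step (1), and every access to it by $\algorithm$ goes through the checker like any other access. Finally, one should note the $p$ for the fingerprints is chosen polynomially large in $m + n + t(m,n)$ so that a union bound over all read/write operations still leaves the error probability below $\tfrac13$; since $t(m,n)$ is polynomial in $m,n$ for the algorithms we care about, $p$ remains $O(1)$ words and the stated $(m + t(m,n), 1)$ bounds hold.
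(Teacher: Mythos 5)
Your proposal matches the paper's proof: initialize the checked memory by having the helper replay the adjacency representation (verified against a fingerprint of the stream), then have the helper supply a timestamp-augmented read/write transcript verified with the Blum et al.\ off-line memory checker, while the verifier locally confirms each operation agrees with the prescribed action of the deterministic algorithm, giving annotation $O(m + t(m,n))$ and $O(1)$ space. This is essentially the same argument as the paper's proof sketch, just spelled out in more detail.
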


\begin{proof}[Proof sketch] 
$\helper$ first repeats (the non-zero locations of) a valid 
adjacency list or matrix representation $G$, as writes to the memory
(which is checked by $\verifier$);  $\verifier$  
uses fingerprints to ensure the edges included in the representation 
precisely correspond to those that appeared in the stream, and can 
use local checks to ensure the representation is otherwise valid. 
This requires $O(m)$ annotation and effectively initializes memory 
for the subsequent simulation. 
Thereafter, $\helper$ provides a valid 
augmented transcript $T'$ of the read and write operations performed 
by algorithm $\algorithm$; 
$\verifier$ rejects if $T'$ is invalid, or if  any  
read or write operation executed in $T'$ does not agree with the 
prescribed action of $\algorithm$. 
As only one read or write operation is 
performed by $\algorithm$ in each timestep, the length of $T'$ is $O(t(m,n))$, 
resulting in an $(m+t(m,n), 1)$ protocol for $P$. %\qed
\end{proof}

\eat{
Note that this outline allows $\algorithm$ to be somewhat non-deterministic: it
can observe $G$ and then ``guess'' a solution. 
Provided $\algorithm$ then (deterministically) {\em checks} that the
solution satisfies $P$, then $\verifier$ can accept this solution. 
We obtain as corollaries of this simulation result tight protocols for
a variety of canonical graph problems.
}

Although Theorem \ref{sim} only allows the simulation of deterministic
algorithms, $\helper$ can non-deterministically ``guess" an optimal solution
$S$ and prove optimality by invoking Theorem \ref{sim} on a
(deterministic) algorithm that merely checks whether $S$ is optimal.
Unsurprisingly, it is often the case that the best-known algorithms
for verifying optimality are more efficient than those finding a
solution from scratch (see e.g. the MST protocol below); therein lies
much of the power of the simulation theorem.

%Fingerprinting is used to ensure that $\helper$ replays the data
%structures honestly. 
%The key to designing effective protocols is to choose algorithms and
%data structures which result in a smaller overall amount of advice. 
%Note that these may not correspond to the most efficient traditional
%algorithms for the problem, and indeed $\helper$ may run a different
%algorithm internally than the one that is ``simulated'' in the proof. 

%\vspace{-0.05in}
%\subsection{Distance and Reachability Problems}
%\vspace{-0.05in}
%
%\vspace{0.05in}
%\para{Minimum Spanning Tree.}

\begin{theorem}
There is a valid $(m, 1)$ protocol to find
a minimum cost spanning tree; a valid $(m + n \log n, 1)$ protocol to verify single-source shortest paths;
and a valid $(n^3, 1)$ protocol to verify all-pairs shortest paths.
\end{theorem}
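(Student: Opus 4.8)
The plan is to obtain all three protocols as corollaries of the simulation result, Theorem \ref{sim}: in each case I pick a \emph{deterministic} random-access-memory algorithm whose running time $t(m,n)$ matches the claimed $\hcost$, and invoke Theorem \ref{sim}. The only real content is choosing the right algorithm, and—exactly as flagged in the remark following Theorem \ref{sim}—exploiting that \emph{verifying} a candidate solution can be much cheaper than \emph{computing} one.

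For the minimum spanning tree, I would have $\helper$ non-deterministically guess a candidate edge set $T$ and then apply Theorem \ref{sim} to a deterministic algorithm $\algorithm$ that merely \emph{checks} that $T$ is a minimum spanning tree: first confirm $|T| = n-1$ and that $T$ is connected and acyclic (an $O(m)$ graph search), then verify minimality by the cycle rule, i.e. that every non-tree edge $(u,v)$ has weight at least the maximum edge weight on the $u$–$v$ path in $T$. The key point is that this last step admits a deterministic linear-time procedure (the MST-verification algorithms of Komlós, of Dixon--Rauch--Tarjan, and of King), so $t(m,n) = O(m)$ and Theorem \ref{sim} yields an $(m + O(m), 1) = (m,1)$ protocol. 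Recomputing an MST from scratch would only give $t(m,n)=O(m\log n)$ or $O(m\,\alpha(m,n))$, and the known linear-time \emph{construction} algorithm is randomized, hence excluded since Theorem \ref{sim} needs a deterministic algorithm—so the appeal to deterministic \emph{verification} is essential here.

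For single-source shortest paths, I would apply Theorem \ref{sim} directly to Dijkstra's algorithm implemented with Fibonacci heaps: it is deterministic, operates on an adjacency-list representation, and runs in $t(m,n) = O(m + n\log n)$, giving the claimed $(m + n\log n, 1)$ protocol. (Alternatively $\helper$ could guess a shortest-path tree together with distance labels $d(\cdot)$ and let $\algorithm$ check that $d(s)=0$, that every tree edge is tight, and that every non-tree edge $(u,v)$ satisfies $d(v) \le d(u)+w(u,v)$; I would present whichever is cleaner.) For all-pairs shortest paths, I would apply Theorem \ref{sim} to the Floyd--Warshall algorithm, which is deterministic, runs in $t(m,n)=O(n^3)$, and naturally works on the adjacency-matrix form allowed by Theorem \ref{sim}; since $m \le n^2 \le n^3$ this is an $(n^3,1)$ protocol.

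The main obstacle is confined to the MST case: one must be careful to invoke a deterministic linear-time tree-verification routine (not a recomputation, and not a randomized algorithm) and to fit the preliminary structural checks—$T$ spanning, exactly $n-1$ edges, acyclic—inside the same $O(m)$ budget. The two shortest-path statements are essentially immediate once the textbook algorithms are substituted into Theorem \ref{sim}; the only caveats worth stating are the standard nonnegative-edge-weight assumption needed for Dijkstra, and that weighted edges (including weights delivered incrementally in the stream) are handled by folding the weights into the adjacency representation that initializes the simulation, whose consistency with the stream is enforced by the linear fingerprints in the proof of Theorem \ref{sim}.
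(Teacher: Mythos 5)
Your proposal is correct and follows essentially the same route as the paper: the paper likewise has $\helper$ supply a candidate tree $T$ and invokes Theorem \ref{sim} on a deterministic linear-time checker (spanning-tree checks plus King's linear-time MST-verification algorithm), and obtains the other two bounds by simulating Dijkstra's algorithm in $O(m+n\log n)$ time and Floyd--Warshall in $O(n^3)$ time. The additional caveats you note (determinism of the verification routine, the preliminary structural checks, handling of weights in the initialized memory) are consistent with, or implicit in, the paper's argument.
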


\begin{proof} 
We first prove the bound for MST. Given a spanning tree $T$, there
exists a linear-time algorithm $\algorithm$ for verifying that $T$ is minimum
(see e.g. \cite{king:97}). 
Let $\algorithm'$ be the linear-time algorithm that,
given $G$ and a subset of edges $T$ in adjacency matrix form, first
checks that $T$ is a spanning tree by ensuring $|T|=n-1$ and $T$ is
connected (by using e.g. breadth-first search), and then executes $\algorithm$
to ensure $T$ is minimum.  We obtain an $(m,1)$ protocol for MST by
having $\helper$ provide a minimum spanning tree $T$ and using Theorem
\ref{sim} to simulate algorithm $\algorithm'$.

The upper bound for single-source shortest path follows from Theorem
\ref{sim} and the fact that there exist implementations of Djikstra's
algorithm that run in time $m + n \log n$. The upper bound for
all-pairs shortest paths also follows from Theorem \ref{sim} and the
fact that the Floyd-Warshall algorithm runs in time $O(n^3)$.  %\qed
\end{proof}

We now provide near-matching lower bounds for all three problems.
%\para{Single-source and all-pairs shortest paths.}
\begin{theorem}
Any protocol for verifying single-source or all pairs shortest paths requires 
$\hcost \cdot \vcost = \Omega(n^2)$ bits. Additionally, if edge weights may be
specified incrementally, then an identical lower bound holds for MST.
\end{theorem}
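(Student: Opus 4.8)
The plan is to derive all three bounds by reduction from \textsc{index} on $\{0,1\}^{N}$ with $N=\Theta(n^2)$, for which \cite[Theorem 3.1]{icalp09} gives $\hcost\cdot\vcost=\Omega(N)$. The generic step is the one already used in Corollary~\ref{cor:lb} and Theorem~\ref{thm:matrixvector}: Alice, holding $x$, runs $\verifier$ on the prefix of the stream that she can build from $x$ and sends $\verifier$'s memory state ($O(\vcost)$ words) to Bob; Bob, holding $k$, appends the suffix of the stream determined by $k$, receives $\helper$'s message ($O(\hcost)$ words), finishes the run of $\verifier$, and reads the claimed answer off $\helper$'s message. If each construction is arranged so that the verified-correct answer encodes the bit $x_k$, then an $(h,v)$ protocol for the graph problem yields an $(O(h),O(v))$ protocol for \textsc{index}, so $h\cdot v=\Omega(N)=\Omega(n^2)$; since every host graph below has $\Theta(n)$ vertices, this is $\Omega$ of the square of the number of vertices.

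For single-source shortest paths I would reuse the graph of Corollary~\ref{cor:lb}, but take $f$ to be a bijection from the \emph{off-diagonal} pairs of $[n]\times[n]$ onto $[n(n-1)]$ (still $\Theta(n^2)$ bit positions; this avoids the degenerate case $i=j$): Alice builds $E_A=\{(i,j):x_{f(i,j)}=1\}$ on vertex set $[n]$, and Bob appends the two edges $(n{+}1,i)$ and $(j,n{+}2)$, where $f(i,j)=k$. With $n{+}1$ as the source, the only length-$3$ path to $n{+}2$ is $n{+}1\to i\to j\to n{+}2$, so the computed distance is $3$ if $x_k=1$ and is $\ge 4$ (or infinite) otherwise, and this distance is part of the single-source output. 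For all-pairs shortest paths no Bob-side edges are needed: with $f$ a bijection $[n]\times[n]\to[n^2]$, Alice alone builds a complete bipartite host on $L\cup R$, $|L|=|R|=n$, including edge $(\ell_i,r_j)$ iff $x_{f(i,j)}=1$; then $\mathrm{dist}(\ell_i,r_j)=1$ iff that edge is present iff $x_k=1$ (it being $\ge 3$, or infinite, otherwise, since the host is bipartite), and Bob reads this single entry of the all-pairs output.

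The genuinely new ingredient, and where I expect the only real subtlety, is MST with incrementally specified weights. The idea is to have Alice encode $x$ as the presence of weight-$1$ edges of a complete graph $K_{n'}$ on $[n']$, for $n'=\Theta(n)$ with $f$ a bijection from the edge set of $K_{n'}$ onto the bit positions, and to also include a fixed, $x$-independent ``fallback'' star joining an apex vertex $a^*$ to every vertex of $[n']$, all star edges of weight $2$, so that the graph is connected for every $x$. Bob, knowing $k$ and hence the edge $e_k$ with $f(e_k)=k$, uses incrementality to add a large weight $M$ to \emph{every} edge of $K_{n'}$ except $e_k$; this is exactly the move that the incremental model enables, and it is what localizes the MST cost to the single bit $x_k$. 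After Bob's updates the only edges of weight less than $M$ are the $n'$ star edges and, when $x_k=1$, the edge $e_k$ of weight $1$; hence Kruskal returns the $n'$-edge fallback star, of cost $2n'$, when $x_k=0$, but substitutes $e_k$ for one star edge, giving cost $2n'-1$, when $x_k=1$ (one checks that no weight-$M$ edge is ever selected, since the star alone already spans $\{a^*\}\cup[n']$). Thus the verified MST weight determines $x_k$, and the reduction closes as before; Bob's many added stream updates are part of his local input and do not count toward $\hcost$ or $\vcost$. The main obstacle here is precisely to make the MST cost depend on $x$ \emph{only} through $x_k$ using nothing but positive weight increments, and it is the pairing of the $x$-independent fallback star with the ``raise everything but $e_k$'' move that achieves this; the same construction makes clear why incrementality is essential, since otherwise Bob cannot increase the weights of the weight-$1$ edges Alice has already placed.
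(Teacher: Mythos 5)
Your proposal is correct, and the shortest-path parts coincide with the paper: the paper simply inherits both the single-source and all-pairs bounds from the shortest $s$-$t$ path reduction of Corollary~\ref{cor:lb}, which is exactly your SSSP construction (your direct complete-bipartite encoding for APSP is a harmless variant; the paper does not need it). Where you genuinely diverge is the MST bound. The paper also reduces from {\sc index} and also exploits incremental weights, but differently: Alice streams $E_A=\{(i,j,1): x_{f(i,j)}=1\}$, and Bob builds \emph{two} completions, $E_{B_1}=\{(u,v,3): f(u,v)\neq k\}$ and $E_{B_2}=E_{B_1}\cup\{e_k\ \text{with weight}\ 1\}$, with repeated edges summing; the bit $x_k$ is then read off by \emph{comparing} the MST costs of $G_1$ and $G_2$ (the cost of $G_2$ is strictly larger iff $x_k=1$), so Bob effectively completes the protocol on two suffixes, using two annotations. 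Your construction instead uses a single instance: an $x$-independent weight-$2$ fallback star guarantees connectivity and pins the MST cost, and Bob's $+M$ increment on every $K_{n'}$ edge except $e_k$ pushes all irrelevant Alice edges above the fallback threshold, so the single verified MST cost ($2n'$ versus $2n'-1$) encodes $x_k$ directly. Both uses of incrementality are legitimate; yours buys a one-shot reduction with no cost comparison across two streams (and a slightly cleaner case analysis), while the paper's keeps all weights constant-sized and needs no ``large $M$'' (immaterial, since $M$ polynomial in $n$ suffices). Your bookkeeping of the reduction costs --- Alice forwards the verifier state, Bob absorbs his own stream suffix locally, and the annotation counts toward $\hcost$ --- matches what the paper leaves implicit when invoking \cite[Theorem 3.1]{icalp09}.
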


\begin{proof} 
The lower bounds for single-source and all-pairs shortest paths are
inherited from shortest $s-t$ path (Corollary \ref{cor:lb}).

To prove the lower bound for MST, we present a straightforward
reduction from an instance of {\sc index}, $(x , k )$, where $x \in
\{0, 1\}^{n^2}$, $k \in [n^2]$.  Alice will construct a graph $G$,
with $V_G = [n]$, and $E_G = E_A$. Bob will then construct two graphs,
$G_1$ and $G_2$, with $E_{G_1} = E_A \cup E_{B_1}$ and $E_{G_2} = E_A
\cup E_{B_2}$.  If edge $(i, j)$ is in $E_A \cap E_{B_1}$, then we
interpret this to mean that the weight of edge $(i, j)$ in $E_{G_1}$
is the \textit{sum} of its weights in $E_A$ and $E_{B_1}$. %, while if
%$(i, j)$ is in neither $E_A$ nor $E_{B_1}$, we interpret this to mean
%that
%there is no edge between $i$ and $j$ in $E_{G_1}$ (equivalently, edge $(i, j, \infty) \in E_{G_1}$). 
Below, we will write $(i, j, w)$ to denote an edge 
between nodes $i$ and $j$ with weight $w$. 

Alice creates $E_A= \{(i, j, 1): x_{f(i, j) =1}\}$ from $x$ alone, where
$f$ is a bijection $[n] \times [n] \rightarrow [n^2]$. 
Bob creates $E_{B_1} = \{(u, v, 3) : f(u, v) \neq k\}$, and 
$E_{B_2} = E_{B_1} \cup \{(i, j, 1)\}$, where $(i, j)$ is the edge satisfying $f(i, j)=k$. Edge $(i, j)$, if it exists, 
is the lowest-weight edge in $E_{G_1}$, and hence $(i, j)$ is in 
any min-cost spanning tree of $G_1$ if and only if $x_k =1$. In contrast, $(i, j)$ is always in
the min-cost spanning tree of $G_2$. Therefore,
if $x_k =1$, then the minimum spanning tree of $G_2$ will be of higher cost than
that of $G_1$, because the weight of $(i, j)$ is $1$ in $E_{G_1}$ and $2$ in $E_{G_2}$. 
And if $x_k =0$, then the mininmum spanning tree of $G_2$ will be of lower cost than
that of $G_1$, because the weight of edge $(i, j)$ will be $\infty$ in $G_1$ and 1 in $G_2$.
Thus, by comparing the cost of the MSTs of $G_1$ and $G_2$, Bob can extract the value
of $x_{k}$. The lower bound now follows from the  hardness of {\sc index}  \cite[Theorem 3.1]{icalp09}.
%\qed
\end{proof}

\para{Diameter.}
The diameter of $G$ can be verified via the all-pairs shortest path
protocol above, but the next protocol improves over the memory
checking approach. 

%Floyd-Warshall gives the diameter of $G$, but the next protocol
%finds it directly. 

%Our final result of this section is an efficient protocol for computing the diameter of $G$. 
\begin{theorem} There is a valid $(n^2 \log n, 1)$ protocol for
 computing graph diameter. Further, any protocol for diameter 
 requires $\hcost \cdot \vcost = \Omega(n^2)$ bits.
\end{theorem}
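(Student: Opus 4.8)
The plan is to prove the two bounds independently. \emph{Upper bound.} View $B := A+I$ as a $0/1$ matrix and, for $k\ge 0$, let $R_k$ be the $0/1$ matrix with $(R_k)_{uv}=1$ iff $u$ reaches $v$ in at most $k$ hops, so $R_0=I$, $R_1=B$, and $R_i\odot R_j=R_{i+j}$ under Boolean matrix multiplication. Hence the diameter is the least $d$ with $R_d=J$ (the all-ones matrix), and is infinite iff $R_{n-1}\neq J$. So $\helper$ announces the claimed value $d$ and proves it in three parts: (i) stream $R_1$ explicitly, which $\verifier$ checks against the edge stream via fingerprints (off-diagonal ones correspond to $E$, diagonal all ones) while recording a fingerprint $\finger(R_1)$; (ii) stream a Horner-style chain of $O(\log d)$ Boolean products assembling $R_1^{\,d}$ --- each step either squares the current matrix or multiplies it by $R_1$ --- with $\verifier$ verifying each product and finally checking the last matrix equals $J$; (iii) do the same for $R_1^{\,d-1}$, checking the last matrix is \emph{not} $J$ (for a disconnected graph, a single chain to $R_1^{\,n-1}$ with $R_1^{\,n-1}\neq J$ suffices). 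Since each product check is sound, the matrix $\verifier$ sees at the end of a chain really is the claimed power of $R_1$, so a wrong answer is exposed: if the true diameter is smaller than $d$ then $R_1^{\,d-1}=J$ is detected, and if larger then $R_1^{\,d}\neq J$ is detected (the degenerate cases $n\le 1$, $d\le 1$ are handled directly). As $\verifier$ carries only $\finger(R_1)$ and a fingerprint of the ``current'' matrix between steps, and each of the $O(\log n)$ products costs $O(n^2)$ annotation, this gives an $(n^2\log n,1)$ protocol.

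The core subroutine is an $(n^2,1)$ protocol verifying one Boolean product $Z=X\odot Y$ of $n\times n$ $0/1$ matrices appearing in the annotation (with $X$ the ``current'' matrix, pinned down by the carried fingerprint). I would verify it by: (a) checking the \emph{integer} product $\widehat Z = XY$ over $\mathbb{F}_p$ with $p=\Theta(n^2)$, exact since entries of $\widehat Z$ lie in $[0,n]$, using Freivalds' quadratic-time test compressed to $O(1)$ verifier words exactly as in the proof of Theorem~\ref{thm:matrixvector}: $\verifier$ picks private random $\rho,\tau$, sets $r=(1,\rho,\dots,\rho^{n-1})$, and uses $\tau^{\top}XYr=\sum_k\bigl(\sum_u\tau^uX_{uk}\bigr)\bigl(\sum_v\rho^vY_{kv}\bigr)$ and $\tau^{\top}\widehat Z r=\sum_{u,v}\tau^u\rho^v\widehat Z_{uv}$, so that $\helper$ re-streams $X$ (column-major) and $Y$ (row-major) interleaved as ``column $k$ of $X$, then row $k$ of $Y$'' and $\verifier$ accumulates the left-hand side in $O(1)$ words while fingerprinting the re-streamed copies to confirm they match the carried fingerprints; (b) checking $Z_{uv}=[\widehat Z_{uv}\neq 0]$ by comparing a fingerprint of $\{(u,v,Z_{uv})\}$ with one of $\{(u,v,[\widehat Z_{uv}\neq 0])\}$, and carrying $\finger(Z)$ forward. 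Thresholding back to $0/1$ at every step is what keeps entries bounded. I expect part (a) to be the main obstacle: Freivalds' test naturally forms length-$n$ intermediate vectors, and squeezing $\verifier$ down to $O(1)$ working space forces $\helper$ to re-stream the operands in the interleaved order above and exploit linearity of fingerprints --- essentially the mechanism already behind Theorem~\ref{thm:matrixvector}.

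\emph{Lower bound.} I would reduce from {\sc index} on $x\in\{0,1\}^{n^2}$, $k\in[n^2]$, for which $\hcost\cdot\vcost=\Omega(n^2)$ \cite[Theorem 3.1]{icalp09}. Fix a bijection $f:[n]\times[n]\to[n^2]$, let $(i^*,j^*)=f^{-1}(k)$, and build a graph on $\{a_1,\dots,a_n\}\cup\{b_1,\dots,b_n\}\cup\{c,s,t\}$: from $x$ alone Alice adds the edge $(a_i,b_j)$ for every $(i,j)$ with $x_{f(i,j)}=1$, together with the fixed hub edges $(a_i,c)$ and $(c,b_j)$ for all $i,j$; from $k$ Bob adds the two edges $(s,a_{i^*})$ and $(t,b_{j^*})$. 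Then $d(a_{i^*},b_{j^*})=1$ if $x_k=1$ and $=2$ otherwise (via the hub), every $s$--$t$ path has the form $s\!\to\!a_{i^*}\!\rightsquigarrow\!b_{j^*}\!\to\!t$ so $d(s,t)=2+d(a_{i^*},b_{j^*})\in\{3,4\}$, and every pair of vertices other than $\{s,t\}$ is within distance $3$ (the hub puts the core within distance $2$, and $s,t$ are one step from the core). Hence the diameter equals $d(s,t)$, which is $3$ iff $x_k=1$, so any $(h,v)$ protocol for diameter yields one for {\sc index}, giving $\hcost\cdot\vcost=\Omega(n^2)$ on this $\Theta(n)$-vertex graph; as the two cases differ by a factor $4/3$, the bound also rules out approximation within $4/3$.
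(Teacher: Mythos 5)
Your proposal is correct, but it reaches both halves of the theorem by routes different from the paper's. For the upper bound, the paper simply invokes the $(n^2\log l,1)$ matrix-power protocol of \cite[Theorem 5.2]{icalp09} as a black box: $\helper$ announces $l$, and the power protocol is run twice on $I+A$ to confirm $(I+A)^l$ has no zero entries while $(I+A)^{l-1}$ does. You instead rebuild that machinery from scratch as a chain of $O(\log n)$ Boolean matrix-product verifications, each checked by a Freivalds-style test compressed to $O(1)$ words via re-streaming the operands in interleaved column/row order and pinning them down with the carried fingerprints --- essentially transplanting the mechanism behind Theorem \ref{thm:matrixvector}. This is self-contained and sound, and the per-step thresholding is a clean way to keep entries bounded; the one detail you should state explicitly is that $\verifier$ must range-check each claimed entry of $\widehat Z$ (e.g.\ insist it lies in $[0,n]$, or threshold on the value reduced mod $p$), since otherwise a cheating $\helper$ could report an entry such as $p$ in place of $0$, pass the mod-$p$ Freivalds test, and corrupt the thresholded matrix. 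Also make sure the initial check of $R_1$ against the edge stream handles repeated/symmetrized edges, a point the paper sidesteps by citing the external protocol. For the lower bound, the paper's reduction attaches hubs $L,R$ to a bipartite graph encoding $x$ and appends two length-2 paths at the endpoints determined by $k$, yielding diameter $5$ versus $7$; your gadget (a hub $c$ joined to all of both sides, plus pendant vertices $s,t$ at $a_{i^*},b_{j^*}$) gives diameter $3$ versus $4$ and is arguably simpler, and your distance analysis is correct since $s$ and $t$ have unique neighbors and $c$ is the only possible common neighbor of $a_{i^*}$ and $b_{j^*}$. The only caveat is the parenthetical approximation constant: with a $3$-vs-$4$ gap the paper's convention (cf.\ its shortest-path bound) would state inapproximability within $\sqrt{4/3}$ rather than $4/3$; this does not affect the $\hcost\cdot\vcost=\Omega(n^2)$ claim itself.
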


\begin{proof}  \cite[Theorem 5.2]{icalp09} gives an $(n^2\log l, 1)$ protocol for verifying that $A^l=B$ for 
a matrix $A$ presented in a data stream and for any positive integer $l$. 
Note that if $A$ is the adjacency matrix of $G$; then $(I+A)^l_{ij} \neq 0$ if and only if 
there is a path of length at most $l$ from $i$ to $j$. Therefore, 
the diameter of $G$ is equal to the unique $l>0$ such that $(I+A)^l_{ij} \neq 0$ for all $(i, j)$,
while $(I+A)^{l-1}_{ij} = 0$ for some $(i, j)$. Our protocol requires $\helper$ to
send $l$ to $\verifier$, and then run the protocol of \cite[Theorem 5.2]{icalp09} twice
to verify that $l$ is as claimed. Since the diameter is at most $n-1$, 
this gives an $(n^2 \log n, 1)$ protocol. 

We prove the lower bound via a reduction from an instance of {\sc index},
%Suppose $n$ is even, and we are 
%Given an instance 
$(x , k )$,
% of {\sc index} 
where 
$x \in \{0, 1\}^{n^2/4}$, $k \in [n^2/4]$. 
Alice creates a bipartite graph $G=(V, E)$ 
from $x$ alone: she 
includes edge $(i, j)$ in $E$ if and only if
$x_{f(i, j)}=1$, where $f$ 
%is a one-to-one correspondence 
is a bijection from edges to indices.
%$[\frac{n}{2}] \times [\frac{n}{2}] \rightarrow [\frac{n^2}{4}]$. 
%where each partite set has size $\frac{n}{2}$. Alice 
Bob then adds to $G$ two nodes $L$ and $R$, 
with edges from $L$ to each node in the left partite set,
edges from $R$ to each node in the right partite set, and an edge 
between $L$ and $R$. 
This ensures that the graph is connected, with diameter at most 3.
Finally, Bob appends a path of length 2 to node $i$, and a path of length 2 to node $j$, where $f(i, j)=k$.
If $x_{k}=0$, then the diameter is now 7, while if  $x_{k}=1$, the
diameter is 5. 
The lower bound follows from
the  hardness of {\sc index}  \cite[Theorem 3.1]{icalp09}
(this also shows that any protocol to approximate diameter better than
$\sqrt{1.4}$ requires $\hcost \cdot \vcost = \Omega(n^2)$ bits;
no effort has been made to optimize the  inapproximability constant).
%\qed
\end{proof}

\section{Conclusion and Future Directions}
In this paper, we showed that a host of graph problems possess 
streaming protocols requiring only constant space and linear-sized 
annotations. 
For many applications of the annotation model, the priority 
is to minimize $\vcost$, and these protocols achieve this goal. 
However, 
these results are qualitatively different from those involving numerical 
streams in the earlier work \cite{icalp09}: for the canonical problems
of heavy hitters, frequency 
moments, and selection, it is trivial to achieve an $(m, 1)$ protocol 
by having $\helper$ replay the stream in sorted (``best'') order. 
The contribution 
of \cite{icalp09} is in presenting protocols obtaining optimal tradeoffs 
between $\hcost$ and $\vcost$ in which both quantities are sublinear 
in the size of the input. There are good reasons to seek these tradeoffs. 
For example, consider a verifier with access to a few MBs or GBs of
working memory.
If an $(m, 1)$ protocol requires only a few KBs of space, it would be 
desirable to use more of the available memory to significantly 
reduce the running time of the verification protocol.

In contrast to \cite{icalp09}, it is non-trivial to obtain $(m, 1)$
protocols for the  graph problems we consider, 
and we obtain tradeoffs involving sublinear values of $\hcost$ and 
$\vcost$ for some problems with an algebraic flavor
(e.g. matrix-vector multiplication, computing effective resistances,
and eigenvalues of the Laplacian). 
%It seems that algebraic structure is helpful, if not essential, for  
%obtaining non-trivial tradeoffs between space and annotation size.
We thus leave as an open question whether it is possible 
to obtain such tradeoffs for a wider class of graph problems, and in
particular if the use of memory checking can be adapted to 
provide tradeoffs. 

A final open problem is to ensure that the work of $\helper$ is
scalable. 
In motivating settings such as Cloud computing environments, 
the data is very large, and $\helper$ may represent a {\em
  distributed} cluster of machines.  
It is a challenge to show that these protocols can be executed in a
model such as the MapReduce framework. 

%Finally, as the principle motivation for this work is verification of computation done within the Cloud, it is of interest to develop
%protocols where $\helper$ can efficiently compute the required annotation in a distributed manner, ideally
%within the MapReduce framework. We have some preliminary results in this direction.

\vspace{.05in}
\para{Acknowledgements.}
We thank Moni Naor for suggesting the use of memory checking. 
%We also leave as an open question the existence of less
%expensive protocols that achieve 
%non-trivial approximation guarantees.
%
%Think of the motivating outsourced computation setting: 
%suppose we have some weak verifier.  Still, they have at 
%least a few MBs of %memory if not GBs.  The protocol takes
%only a few KBs to run.  Wouldn't it be desirable to use more of 
%the available memory to %significantly reduce the running time 
%of the verification protocol?

%; either a positive 
%or negative result is likely to require novel techniques.
%
\bibliographystyle{IEEEtran}
\bibliography{biblio}

% Generated by IEEEtran.bst, version: 1.13 (2008/09/30)
\begin{thebibliography}{10}
\providecommand{\url}[1]{#1}
\csname url@samestyle\endcsname
\providecommand{\newblock}{\relax}
\providecommand{\bibinfo}[2]{#2}
\providecommand{\BIBentrySTDinterwordspacing}{\spaceskip=0pt\relax}
\providecommand{\BIBentryALTinterwordstretchfactor}{4}
\providecommand{\BIBentryALTinterwordspacing}{\spaceskip=\fontdimen2\font plus
\BIBentryALTinterwordstretchfactor\fontdimen3\font minus
  \fontdimen4\font\relax}
\providecommand{\BIBforeignlanguage}[2]{{%
\expandafter\ifx\csname l@#1\endcsname\relax
\typeout{** WARNING: IEEEtran.bst: No hyphenation pattern has been}%
\typeout{** loaded for the language `#1'. Using the pattern for}%
\typeout{** the default language instead.}%
\else
\language=\csname l@#1\endcsname
\fi
#2}}
\providecommand{\BIBdecl}{\relax}
\BIBdecl

\bibitem{graphmining08}
A.~McGregor, ``Graph mining on streams,'' in \emph{Encyc. of Database
  Systems}.\hskip 1em plus 0.5em minus 0.4em\relax Springer, 2009.

\bibitem{semistream2}
J.~Feigenbaum, S.~Kannan, A.~McGregor, S.~Suri, and J.~Zhang, ``On graph
  problems in a semi-streaming model,'' \emph{Theor. Comput. Sci.}, vol. 348,
  no.~2, pp. 207--216, 2005.

\bibitem{wstream}
C.~Demetrescu, I.~Finocchi, and A.~Ribichini, ``Trading off space for passes in
  graph streaming problems,'' in \emph{SODA}, 2006, pp. 714--723.

\bibitem{sortstream}
G.~Aggarwal, M.~Datar, S.~Rajagopalan, and M.~Ruhl, ``On the streaming model
  augmented with a sorting primitive,'' in \emph{FOCS}, 2004, pp. 540--549.

\bibitem{bestorder}
A.~Das~Sarma, R.~J. Lipton, and D.~Nanongkai, ``Best-order streaming model,''
  in \emph{Theory and Applications of Models of Computation}, 2009, pp.
  178--191.

\bibitem{icalp09}
A.~Chakrabarti, G.~Cormode, and A.~Mcgregor, ``Annotations in data streams,''
  in \emph{ICALP}, 2009, pp. 222--234.

\bibitem{Lipton-old}
R.~J. Lipton, ``Efficient checking of computations,'' in \emph{STACS}, 1990,
  pp. 207--215.

\bibitem{ICDE10}
M.~Yiu, Y.~Lin, and K.~Mouratidis, ``Efficient verification of shortest path
  search via authenticated hints,'' in \emph{ICDE}, 2010.

\bibitem{linalg1}
K.~L. Clarkson and D.~P. Woodruff, ``Numerical linear algebra in the streaming
  model,'' in \emph{STOC}, 2009, pp. 205--214.

\bibitem{linalg2}
T.~Sarlos, ``Improved approximation algorithms for large matrices via random
  projections,'' in \emph{IEEE FOCS}, 2006.

\bibitem{memcheck}
M.~Blum, W.~Evans, P.~Gemmell, S.~Kannan, and M.~Naor, ``Checking the
  correctness of memories,'' pp. 90--99, 1995.

\bibitem{Lipton-older}
R.~J. Lipton, ``Fingerprinting sets,'' Princeton University, Tech. Rep.
  Cs-tr-212-89, 1989.

\bibitem{KandT}
J.~Kleinberg and E.~Tardos, \emph{Algorithm Design}, 2005.

\bibitem{textbook}
A.~Schrijver, \emph{Combinatorial Optimization: Polyhedra and Efficiency},
  2003.

\bibitem{LPbook}
------, \emph{Theory of linear and integer programming}, 1986.

\bibitem{bollobastext}
B.~Bollobas, \emph{Modern Graph Theory}.\hskip 1em plus 0.5em minus 0.4em\relax
  Springer, 1998.

\bibitem{king:97}
V.~King, ``A simpler minimum spanning tree verification algorithm,''
  \emph{Algorithmica}, vol.~18, no.~2, pp. 263--270, 1997.

\bibitem{feigenbaum2}
J.~Feigenbaum, S.~Kannan, A.~McGregor, S.~Suri, and J.~Zhang, ``Graph distances
  in the data-stream model,'' \emph{SIAM J. Comput.}, vol.~38, no.~5, pp.
  1709--1727, 2008.

\end{thebibliography}

\appendix
\newpage

\section{Direct Protocols for Breadth First and Depth First Search}
\label{app:dfs}

In this appendix, we consider the problem of ``verifying" a breadth first or depth first search. 
Our goal is to force $\helper$ to provide to 
$\verifier$ the edges of $G$ in the order they would be visited by a
 non-streaming BFS or DFS algorithm, despite limiting $\verifier$ to 
 logarithmic space. We can accomplish this with linear annotation simply by invoking 
 Theorem \ref{sim} on a BFS or DFS algorithm; however we now give ``direct" protocols for this problem.
Our protocols improve over Theorem \ref{sim} by constant factors, because we avoid having to 
replay the entire contents of memory at the beginning and end of the protocol. 
As an immediate corollary of our BFS protocol, 
 we obtain an $(m, 1)$ protocol for bipartiteness. 

To make precise the notion of ``verifying a BFS'', 
we define the concept of a BFS transcript, using the concept of a
label-augmented list of edges defined in Definition \ref{labeldef}. 

\begin{definition}  \label{bfstranscript}  
Given a connected undirected graph $G$, a BFS 
transcript $T$ rooted at $s$ is a label-augmented list of edges $E'$, with 
each label $l(e,u)$ claimed to be the distance from $s$ to $u$. 
Let $d_e:= \min(l(e,u), l(e,v))$. 
A BFS transcript is {\em valid} if: 
(a) Edges are presented in increasing order of $d_e$; 
(b) $E'$ is a valid list of label-augmented edges; and 
(c) For all $u$, $l(u)$ is the (hop) distance from $s$ to $u$.
 \end{definition}
  
For clarity, we make explicit the labels of each edge; a more concise
protocol would group edges in order of
$l(e,u) + l(e,v)$. 
However, this does not alter the asymptotic cost. 

It is easy to see that any valid BFS transcript corresponds to the order 
edges are visited in an actual BFS of $G$. We therefore say a protocol 
verifies a BFS if it accepts  any valid BFS transcript (possibly with 
additional annotation interleaved) and rejects any invalid BFS transcript 
with high probability. With this in mind, we now define augmented BFS transcripts.

Any valid BFS transcript corresponds to a possible order of edges
being visited in an actual BFS of $G$. 
So we seek a protocol that accepts any valid BFS transcript
(possibly with additional annotation interleaved), and rejects an
invalid transcript (whp).  This leads us to define an augmented BFS
transcript. 

Let $T$ be a BFS transcript. 
%For each $i \geq 0$, we
We partition the edges by their distance, so that 
 $E_l=\{e: d_e = l\}$.
Likewise, we define 
% and node-level $l$ of $T$ to be 
$V_l=\{u | \exists e=(u, v) \mbox{ s.t. } l(e,u)=l\}$. 
In a valid BFS transcript, the sets $V_l$  partition  $V$
since $l(e,u)$ gives the same distance from $s$ to $u$
every time is is listed; an invalid transcript may not have this property. 

\begin{definition} \label{augmentedbfs} 
An  augmented 
BFS transcript $T'$ is a BFS transcript $T$ 
where additionally before each $E_l$, 
$V_{l+1}$ is presented, 
along with degrees $\deg_l(v)$, defined as 
$\deg_l(v) = |\{ u | (u,v) \in E, u \in V_{l}, v \in V_{l+1}\}|$. 

The augmented BFS transcript
is valid if the underlying BFS transcript is valid 
and all claimed degrees are truthful. 
%in $NL(l+1)$ to nodes in $NL(l)$ is truthful for all $l$. 
\end{definition}
\begin{theorem} \label{BFStheorem}
\mbox{
There is a valid $(m,1)$ protocol to accept a valid augmented BFS
transcript.} 
\end{theorem}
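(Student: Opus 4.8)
The plan is to verify the three validity conditions of an augmented BFS transcript using a constant number of fingerprints, leaning heavily on the Consistent Labels protocol of Lemma~\ref{labellemma} and the fingerprint machinery. First I would have $\verifier$ treat the list $E'$ as a label-augmented list of edges and invoke Lemma~\ref{labellemma} to check condition~(b): the labels $l(e,u)$ are consistent across all occurrences of $u$, and $E'=E$. This already pins down a well-defined candidate distance function $l(\cdot)$ and guarantees every edge of the stream is replayed exactly once. Checking condition~(a), that edges are presented in nondecreasing order of $d_e=\min(l(e,u),l(e,v))$, is a purely local test: $\verifier$ tracks the running value of $d_e$ and rejects if it ever decreases, using $O(1)$ space.

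The substantive work is condition~(c) together with the truthfulness of the claimed $\deg_l(v)$ values; this is where the augmentation pays off. The idea is that a label assignment $l(\cdot)$ equals the true BFS distance from $s$ iff (i) $l(s)=0$; (ii) every node $v\neq s$ with $l(v)=\ell+1$ has at least one neighbor at distance $\ell$ (so its distance is at most $\ell+1$); and (iii) no edge connects a node at distance $\ell$ to a node at distance $>\ell+1$ (so its distance is at least $\ell+1$) --- equivalently, each $\deg_\ell(v)$ is positive for every $v\in V_{\ell+1}$, and every edge spans levels differing by at most one. I would implement this by having $\verifier$, as it reads the $V_{\ell+1}$ blocks, fingerprint the multiset $\{(v,\ell+1):\deg_\ell(v)\}$ and separately, as it reads the edge blocks $E_\ell$, fingerprint the multiset of pairs $(v,\ell+1)$ for each edge $(u,v)$ with $l(e,u)=\ell, l(e,v)=\ell+1$; matching these two fingerprints forces the claimed degrees to be exactly the number of down-edges into each $v$, and since the $V_{\ell+1}$ list also lets $\verifier$ confirm (via a fingerprint against the node list from Lemma~\ref{labellemma}) that the $V_\ell$'s partition $V$, positivity of each $\deg_\ell(v)$ can be checked locally as the block is read. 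Simultaneously $\verifier$ checks locally on each replayed edge that its two labels differ by at most $1$ (rejecting any edge with $|l(e,u)-l(e,v)|\ge 2$), and that $s$ appears with label $0$; a BFS transcript in which consecutive levels are connected, no edge jumps more than one level, and level $0$ is exactly $\{s\}$ necessarily has $l(v)$ equal to the true hop-distance by an easy induction on $\ell$.

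For the cost analysis: the Consistent Labels sub-protocol has $\hcost$ $O(m)$ and $\vcost$ $O(1)$; the extra annotation consists of the blocks $V_{\ell+1}$ with their degrees, whose total size is $O(n)=O(m)$ for a connected graph, plus the edge labels which are already part of the label-augmented list. $\verifier$ maintains only a constant number of fingerprints (one for the $(v,\ell+1):\deg_\ell(v)$ multiset, one for the down-edge pairs, plus those internal to Lemma~\ref{labellemma}) and a constant number of running counters, so $\vcost=O(1)$. Soundness follows because any deviation --- an edge omitted or altered, an inconsistent label, a false degree, an edge jumping levels, or a level connectivity failure --- causes one of the fingerprint tests to fail whp or triggers an immediate local rejection. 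The main obstacle I anticipate is the bookkeeping to argue that matching the degree fingerprint against the down-edge fingerprint, in combination with the partition check, really does certify that every node on level $\ell+1$ has a parent on level $\ell$ \emph{and} that this is consistent with the edges actually replayed; one has to be careful that the ordering requirement~(a) guarantees all of $E_\ell$ is seen before the verifier needs to finalize the level-$\ell$ reasoning, so that a single streaming pass suffices.
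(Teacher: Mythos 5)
Your proposal is correct and follows essentially the same route as the paper's proof: Lemma~\ref{labellemma} for condition~(b), a local ordering check for condition~(a), a fingerprint match between the claimed degree multiset and the down-edges with $l(e,u)=\ell,\ l(e,v)=\ell+1$, and local checks (each level-$(\ell+1)$ node has a level-$\ell$ neighbor, no edge spans levels differing by more than one) feeding an induction from $l(s)=0$ to certify condition~(c). The only cosmetic difference is that you fold the per-level fingerprints into a single global fingerprint keyed by level, where the paper reuses the same $O(1)$ memory level by level; both give the same $(m,1)$ bounds.
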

\begin{proof} 
To ensure Definition \ref{bfstranscript} (a),
$\verifier$ rejects if the edges are not presented in increasing order
of $d_e$, which is trivial to check  
since $l(e,u)$ and $l(e,v)$ are presented for every edge. 
$\verifier$ ensures Definition \ref{bfstranscript} (b) 
using the ``Consistent Labels'' protocol of Lemma \ref{labellemma}.

To verify the augmented transcript, 
fingerprints ensure multiset $\{ v \in V_{l+1} : \deg_l(v)\}$ matches 
%the claimed degrees of nodes in $V_{l+1}$  matches
the multiset of edges $e=(u,v)$ with $l(e,u)=l, l(e,v)=l+1$.
$\verifier$ (re)uses the same working memory to store these fingerprints
for each level $l$ in turn. 
To ensure Definition \ref{bfstranscript} (c),
$\verifier$ uses the augmented annotation at each edge-level to
reject if a node at level $l+1$ is not incident  
to any nodes at level $l$, or if an edge is ever presented 
with $|l(e,u) - l(e,v)| > 1$. 
An inductive argument ensures  
that $l(e,u)$ is the distance from $s$ to $u$ for all $u$ and all $e$
incident to $u$ 
(the base case is just $l(s)=0$). 
For the $\vcost$,
note that any valid augmented BFS transcript has length $O(n+m) = O(m)$.
%\qed
\end{proof}
From this theorem, we obtain an $(m, 1)$ protocol 
for bipartiteness:
%observe that 
$G$ is bipartite if and only if
there is no edge $(u,v)$  with $l(e,u) = l(e,v)$, which is easily
checked given a valid augmented BFS transcript.
Further, any online 
protocol for bipartiteness requires $\hcost\cdot \vcost = \Omega(n)$ bits, 
even when $m=O(n)$;
this lower bound follows by reducing  bipartiteness to
{\sc index} \cite{feigenbaum2}, which has the same lower bound 
\cite[Theorem 3.1]{icalp09}. 

Our approach for Depth First Search is similar, but more involved. 
Analogously to BFS, we now define DFS transcripts.

\begin{definition} \label{dfstranscript}
Given a connected undirected graph $G$, a DFS transcript consists of
$m+2n$ rows, where each row is of the form
$\edge(u,v)$, $\push(u)$, or $\pop(u)$.
%one copy of each edge and a collection of symbols of the form
%$\push(u)$ and $\pop(u)$, interleaved.
Each symbol $\push(u)$ (resp. $\pop(u)$)
represents a simulated push (pop) of node $u$ on a
(simulated) stack.
Reading the transcript in order,
the most recently pushed vertex that has not yet been
popped is termed the ``top'' node.
A DFS transcript is \textit{valid} for $G=(V,E)$
if \\
%it  satisfies the following properties, when read in order:
(a) Every $v\in V$ is pushed exactly once, immediately after it first appears in an edge; \\
(b) Every $e\in E$ is presented exactly once, and is incident to the top node; \\
(c) Every $v\in V$ is popped exactly once, when it is the top and all edges
incident to it have already appeared in the transcript.
\end{definition}

The order edges are visited in an actual DFS of $G$ always
corresponds to the ordering of edges in a valid DFS transcript,
and each simulated push and pop operation corresponds to
actual push and pop operations in the same DFS.
Therefore, just as in the BFS case,
we seek a protocol that accepts any valid
DFS transcript
(possibly with additional annotation on each event) and
rejects invalid DFS transcripts whp.
We write
$t(\event,u)=t$ to denote that the $t$'th row of the transcript is
$\event(u)$.
Thus, the top node at time $t$ is
$top(t) = \arg\max_{u} (t(\push,u) | t(\push,u) \leq t < t(\pop,u))$, and
we say the stack height at time $t$ is
$height(t) = |\{ u | t(\push,u) \leq t < t(\pop,u)\}|$.

%Before defining augmented DFS transcripts, we will need
%some definitions and a lemma.
\eat{
To this end,
we say a list of edges $E'$ is timestamp-augmented if $E'$ is label-augmented
and moreover $\helper$ may annotate any edge $(u, v)$ with symbols of the form
$\event(u)$ or $\event(v)$, where $\event \in \{\push, \pop\}$;
$t(\event,u)=t$ denotes that the $t$'th row of the transcript is
$\event(u)$.
%denotes that the $t$'th edge is annotated with $\event(u)$.
%Later, our events will be simulated pushes and pops.
A list of timestamp-augmented edges $E'$ is valid if
for each node $u$, $\event(u)$ occurs exactly once, say at time $t(\event,u)$,
and for all $u$, $l(u)=t(\event,u)$.
}
%We are now ready to define augmented DFS transcripts.

\begin{definition} \label{augmenteddfs}
An augmented DFS transcript $T'$ is a DFS transcript where in addition:\\
(a) before the edges, triples $(u, level(u), ntop(u))$ are listed for
each node in order; \\
(b) every edge $e=(u,v)$ is annotated with
$(t(\push,u), t(\pop,u), t(\push,v), t(\pop,v))$.\\
% are time-stamp augmented for both push and pop events;\\
(c) each $\pop(u)$ with $t(\pop,u)=t$
is annotated with $(v, t(\push,v), t(\pop,v))$ where $v=top(t+1)$
\end{definition}

\eat{
\begin{lemma} \label{timestamplemma}
There exists a protocol with $\vcost = \log m$ that verifies
that each occurrence of $u$ is labeled with the same value of
$t(\push,u)$ (respectively, $t(\push,u)$), and each
$\push(u)$ (resp. $t(\pop,u)$) occurs exactly once.
% accepts any valid list of timestamp-augmented edges
%and rejects an invalid one with high probability.
\end{lemma}

\begin{proof}
This is easily checked to see that only
It remains  to ensure that for all $u$, $\push(u)$
occurs exactly once, at time $t(\push,u)$.
This is also checked by fingerprinting
To this end, let $S = \{(u, t(\push,u))\}$ and
$S' = \{(u, t)| t(\event,u)=t \}$.
$\verifier$ first computes $\finger(S)$ from the annotation preceding
the list of edges.
Then, $\verifier$ computes $\finger(S')$ while observing $E'$.
$\verifier$ accepts if $\finger(S)=\finger(S')$, else correctly rejects.
 %\qed
\end{proof}}

\begin{definition} \label{validaugmenteddfs}
%$ is the node at the top of the stack at time $t$.
%We say the simulated stack is at height $k$ at time $t$ if $k$ nodes
%have been pushed but not popped at time $t$.
We say an augmented DFS transcript is valid if \\
(a) the underlying DFS transcript is valid; \\
(b) for all $u$, $level(u) = height(t(\push,u))$
%is the height of the (simulated) stack when $u$ is pushed
and $ntop(u)$ is $|\{ t | top(t) = u\}|$. \\
%the number of time steps during which $u$ is the top;
(c) all $t(\push,u)$ and $t(\pop,u)$ annotations are consistent; \\
(d) the triple after each pop event correctly identifies the new top.
 \end{definition}
\begin{theorem}
% \label{DFStheorem}
\mbox{There is a valid $(m,1)$ protocol 
to accept a valid augmented DFS transcript.}
%with a $vcost$ of $O(\log m)$ that accepts
%if $\helper$ provides a valid augmented DFS transcript, and rejects otherwise (whp).
\end{theorem}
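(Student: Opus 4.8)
The plan is to have $\verifier$ make a single streaming pass over the augmented transcript $T'$, keeping only $O(1)$ words of state: the index $t$ of the current row, a register $cur$ holding the node claimed to be on top of the (simulated) stack, a stack-height counter $ht$, a buffer recording (the relevant data of) the previous row, and a constant number of fingerprints. All of the ``global'' bookkeeping is discharged by fingerprints in the spirit of Lemma~\ref{labellemma}. Applying the Consistent Labels protocol with $l(u)=(t(\push,u),t(\pop,u),level(u),ntop(u))$ as the label of $u$ --- these appear on every incident edge by Definition~\ref{augmenteddfs}(b), and we let $\helper$ extend the list of Definition~\ref{augmenteddfs}(a) to carry $l(u)$ and $\deg(u)$, an $O(n)$-word addition --- guarantees that $E'=E$, that every occurrence of a node carries the same $l(u)$, and that the prefixed list really enumerates $V$. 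Four more fingerprint comparisons pin the claimed timestamps and counters to reality: $\finger(\{u:\push(u)\text{ occurs}\})$ and $\finger(\{u:\pop(u)\text{ occurs}\})$ are each compared to $\finger(V)$ (so each node is pushed once and popped once), and $\finger(\{(u,r):\text{row }r\text{ is }\push(u)\})$ is compared to $\finger(\{(u,t(\push,u)):u\in V\})$, and symmetrically for $\pop$ (so the claimed $t(\push,\cdot)$ and $t(\pop,\cdot)$ equal the true row indices).

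With truthful timestamps available, every remaining obligation of Definitions~\ref{dfstranscript} and~\ref{validaugmenteddfs} becomes a local test as $\verifier$ scans the rows. On an edge $\edge(a,b)$ at row $r$: reject unless $cur\in\{a,b\}$ (edge incident to the top), unless $r<t(\pop,a)$ and $r<t(\pop,b)$ (so no edge incident to a vertex appears after that vertex is popped, giving Definition~\ref{dfstranscript}(c) once each edge is known to appear exactly once), and unless $r\ge t(\push,a)-1$ and $r\ge t(\push,b)-1$ (so the earliest edge at a vertex is the one just before its push). On $\push(u)$ at row $t$: reject unless the previous row is an edge incident to $u$ --- combined with the last inequality this gives Definition~\ref{dfstranscript}(a) --- then set $cur:=u$, $ht:=ht+1$, feed $u$ into the ``pushed'' fingerprint, $(u,t)$ into the push-time fingerprint, and $(u,ht)$ into a fingerprint later compared with $\finger(\{(u,level(u)):u\in V\})$, which certifies $level(u)=height(t(\push,u))$. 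On $\pop(u)$ at row $t$: reject unless $cur=u$ (we pop the top); then read the triple $(v,t(\push,v),t(\pop,v))$ of Definition~\ref{augmenteddfs}(c), set $cur:=v$ and $ht:=ht-1$. Throughout, after updating $cur$ for the current row, $\verifier$ feeds $cur$ into a fingerprint $f_{\mathrm{top}}$ of the multiset $\{top(t):t\}$ (the root's final $\pop$, which empties the stack, contributes nothing), and at the end compares $f_{\mathrm{top}}$ with $\finger(\{u:ntop(u)\})$, certifying $ntop(u)=|\{t:top(t)=u\}|$. The first row is handled as the special push of the root $s$ (no preceding edge is required, and $level(s)=1$), and $\verifier$ finally checks $ht=0$ and that exactly $m$ edge-rows, $n$ push-rows, and $n$ pop-rows were read.

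The delicate point --- and the one place the argument needs a genuine idea rather than routine fingerprinting --- is that $cur$ is reset at each $\pop(u)$ using a node $v$ that $\helper$ supplies, so we must certify that $v$ is really the vertex lying just below $u$ on the stack. For this, $\verifier$ records, for each non-root $u$, the pair $(u,p_u)$ into a fingerprint $f_1$, where $p_u$ is the value of $cur$ at the edge-row $t(\push,u)-1$ (the edge that triggers $u$'s push), and records $(u,v_u)$ into a fingerprint $f_2$, where $v_u$ is the new top claimed by the $\pop(u)$ annotation; comparing $f_1$ and $f_2$ forces $v_u=p_u$ for every $u$. A short induction on the row index then shows soundness: as long as the prefix read so far is a legitimate LIFO sequence, $cur$ equals the true top (immediate after a $\push$; after a $\pop$ of $u$ the vertex below $u$ is exactly the top at the moment $u$ was pushed, namely $p_u=v_u$, so $cur$ is reset correctly), so the test ``$cur=u$ at $\pop(u)$'' must fire at the first illegitimate pop --- hence the whole sequence is a valid stack execution, and together with the local tests this yields that $T'$ is a valid augmented DFS transcript. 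The soundness error is the union of $O(1)$ fingerprint error probabilities, driven below $\tfrac13$ by taking the field size polynomial in $m$; the annotation has length $O(m+n)=O(m)$ (the transcript is $m+2n$ rows with $O(1)$ extra symbols each) and $\vcost=O(1)$, establishing the claimed $(m,1)$ protocol, which improves on the bound obtainable from Theorem~\ref{sim} only in the constant factor, as desired.
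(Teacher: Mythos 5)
Your protocol is correct, but the heart of your soundness argument is genuinely different from the paper's. Both proofs share the same scaffolding: Consistent Labels to tie every timestamp annotation to a single per-node value, fingerprint comparisons forcing each node to be pushed and popped exactly once at its claimed row, the $(u,\mathrm{height})$-at-push fingerprint for $level(u)$, and purely local timestamp tests for the clauses of Definition~\ref{dfstranscript}. Where you diverge is on the one delicate issue, certifying that the node $\helper$ names as the new top after each $\pop(u)$ is genuine. The paper gets this from the pair of fingerprints over $\{(u,level(u)):ntop(u)\}$ versus the observed $(top(t),height(t))$ pairs, which forces each node to always be reported as top at the same height, and then a combinatorial ``witness pair'' argument showing that at any time at most one unpopped node can sit at a given height, so the claimed top within the window $t(\push,v)\le t<t(\pop,v)$ is unique; thus the $level/ntop$ data is simultaneously what is being verified and the enforcement mechanism. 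You instead record, at $\push(u)$, the pair $(u,p_u)$ where $p_u$ is the verifier's current top just before the push, fingerprint these against the pairs $(u,v_u)$ claimed at $\pop(u)$, and argue by induction on rows that if $v_u=p_u$ for all $u$ then the register $cur$ tracks the true top throughout; your $ntop/level$ checks then serve only to verify clause (b) of Definition~\ref{validaugmenteddfs} rather than to carry the soundness. Your route buys a cleaner argument (the uniqueness-of-top issue dissolves into the induction, with one multiset equality per node replacing the witness-pair case analysis), at the modest price of slightly extending the annotation format (degrees and full labels in the node list, and the extra $(u,p_u)/(u,v_u)$ bookkeeping), which is permissible since the protocol may interleave additional annotation; both approaches give $(m,1)$ with $O(1)$ fingerprints. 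One small loose end: clause (d) asks that the whole triple $(v,t(\push,v),t(\pop,v))$ after a pop be correct, and you pin down only the identity $v$; you should also fold the two timestamps in that triple into your label-consistency fingerprints (treating the pop annotation like another occurrence of $v$), which is a one-line addition.
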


\begin{proof}
To ensure Definition \ref{validaugmenteddfs}~(b),
$\verifier$ initially computes a fingerprint $f_1$ of the multiset
$\{(u, level(u)): ntop(u)\}$.
Throughout the protocol, $\verifier$ also
tracks $height(t)$, and maintains a fingerprint $f_2$ of
all $(top(t), height(t))$ pairs observed while reading the augmented
DFS transcript.
%where $top(t)$ is the node claimed to be the top at time-step $t$, and
%$height(t)$ is the height of the stack at time-step $t$.
If property (b) is satisfied,
$f_1$ and $f_2$ will match, otherwise they will differ with high
probability.
This is essentially the ``Consistent Labels'' protocol applied to this
setting.

For Definition \ref{validaugmenteddfs}~(c), we use
% $\verifier$ uses
the ``Consistent Labels'' protocol of
Lemma \ref{labellemma} to ensure that the claimed $\push$ times are
consistent for all nodes. 
\eat{JT comment: I think the rest of this
paragraph is unnecessary (except for the last sentence). Firstly, \ref{validaugmenteddfs}~(c) only
requires consistency, and this is guaranteed by the Consistent Labels protocol.
Second, if the push timestamps are
all consistent, then by definition of consistency isn't there only one push
time claimed for each node? Another point: the Consistent Labels protocol
is only stated for a label-augmented list of edges, but we really need it to
work here for a label-augmented list of edges with events thrown in. It's
conceptually an easy fix: just view the events as edges in which a single
node appears. Should we just brush this under the rug?
GRC response: we still need to explicit verify defn 6a, 6c: that each
node is pushed exactly once in the original transcript.  That's what's
happening here.  Am happy to gloss over the remaining issues.
}
This protocol presents a list of nodes in sorted order accompanied
by their $t(\push,u)$ values.
This list is also used to check to see that exactly one $\push$ event 
is claimed for each node: a fingerprint of the list can then be compared to one
generated directly from the transcript.
% to ensure that the transcript
%contains exactly one $\push$ event for each node (whp).
The case for $\pop$ is identical.

\eat{
[GC comment: I don't know what this paragraph means]

We may ensure that Property (c) of Definition \ref{validaugmenteddfs}
is satisfied
by running a simple extension the timestamp protocol of Lemma
\ref{timestamplemma} with respect
to push and pop events (the extension accounts for the fact that in
addition to timestamping edges, we must
also assign timestamps to $v$ whenever $\helper$ claims $v$ is the new
top).
}

For Definition \ref{validaugmenteddfs}~(d),
$\verifier$ can reject if the triple of Definition \ref{augmenteddfs}~(c)
has $t(\push,v) > t$ or $t(\pop,v) < t$.
Observe that the stack must have the same height every time
$v$ is claimed to be the top, else the fingerprints $f_1$ and $f_2$
will differ (whp).
Thus we can assume that $v$ is always reported as top at the same
height, which is $h=height(t(\push,v))$.
We argue
%by induction [over what exactly?  $t$?]
that at any timestep $t$ there can be only one node $v$
at height $h$ such that $t(\push,v) \leq t < t(\pop,v)$.
If not,
then let $u$ and $v$ be a ``witness'' pair of
nodes both claimed to be at height $h$
such that $t(\push,v) < t(\push,u) < t(\pop,v)$:
if there are many such nodes, then pick $v$ with the
smallest value of $t(\push,v)$, and $u$ likewise with
the smallest $t(\push,u)$ such that the condition is met.
For this to happen, there must have been a $\pop$ event to bring the
height below $h$;
either this is to $v$ itself, contradicting the assumptions, or to
some other node, $w$.
Assuming that $t(\push,w) < t(\pop,w)$ (else our other checks identify
this error), then $v$ and $w$ form a witness pair
with $t(\push,w) < t(\pop,w) < t(\push,u)$,
contradicting the claim that $u$ and $v$ formed the
earliest witness pair.

\eat{ To ensure Property (d) of Definition \ref{validaugmenteddfs}
 holds, we have $\verifier$ reject if $\helper$ ever claims a node
 $v$ is the new top, but $v$'s push timestamp is in the future or
 $v$'s pop timestamp is in the past.  To see that this suffices for
 ensuring Property (d), note that if a pop occurs bringing the height
 of the stack to $k$, and $\helper$ states that $v$ is the new top
 but $v$ was not pushed when the stack was at height $k$, then
 fingerprints $f_1$ and $f_2$ will differ with high probability,
 causing $V$ to reject. Therefore, a simple inductive argument shows
 that at any timestep $t$, there is only one node who has been pushed
 at height $k$ and has not yet been popped (and this is the true
 top); we conclude that $\helper$ must provide the true top after a
 pop at time $t$, or cause $\verifier$ to reject with high
 probability.
}

Next, we ensure that the underlying DFS transcript is valid.
Definition \ref{dfstranscript}~(a) is easy to check, given
Definition \ref{validaugmenteddfs}~(c), by having $\verifier$ reject
if the $t$'th row of the transcript is
 $\edge(u,v)$ and is annotated with $t(\push,v)>t+1$.
 Definition \ref{dfstranscript}~(b) is also easy to check given
 Definition \ref{validaugmenteddfs}~(d): $\verifier$ tracks $top(t)$
 implicit from $\push$ operations or from annotations
 on $\pop$ events.
This leaves Definition \ref{dfstranscript}~(c) i.e. to ensure that
nodes are popped at the correct times.
To accomplish this, it suffices to have $\verifier$ reject if
the $t$'th row of the transcript is
$\edge(u,v)$ and is annotated with $t(\pop,u)<t$
or if the $t$'th row is
$\pop(u)$ and $top(t)\neq u$.
These rules ensure $\pop(v)$ cannot occur ``too early'', i.e.
before all edges incident to $v$ have been presented:
when such an edge is presented in row $t$,
$\verifier$ will reject due to $t(\pop,v)<t$.
Likewise, $v$ cannot be popped ``too late'',
since if $top(t)=v$ and there are no more edges incident on $v$,
the next edge will not be incident on $top(t)$ (required by
Definition \ref{dfstranscript}~(b)) unless $v$ is popped first.
%\qed
\end{proof}

\end{document}